\definecolor{darkblue}{rgb}{0.0,0.0,0.5}
\providecommand{\algorithmname}{Algorithm}
\theoremstyle{plain}
\newtheorem{thm}{\protect\theoremname}
  \theoremstyle{plain}
  \newtheorem{lem}{\protect\lemmaname}
  \theoremstyle{definition}
  \newtheorem{defn}{\protect\definitionname}
\setlist[description]{leftmargin=0pt}
\DeclareMathAlphabet{\mathpzc}{OT1}{pzc}{m}{it}
\let\oldenumerate=\enumerate
\def\enumerate{
\oldenumerate
\setlength{\itemsep}{2pt}
\vspace{-4pt}
}
\let\olditemize=\itemize
\def\itemize{
\olditemize
\setlength{\itemsep}{2pt}
\vspace{-3pt}
}
  \providecommand{\definitionname}{Definition}
  \providecommand{\lemmaname}{Lemma}
\providecommand{\theoremname}{Theorem}
\begin{document}
\global\long\def\bla{\big\langle}
 \global\long\def\bra{\big\rangle}
 \global\long\def\zp{\mathbb{Z}_{p}}
 \global\long\def\zq{\zp}
 \global\long\def\deg{\mathsf{deg}}

\global\long\def\la{\langle}
 \global\long\def\ra{\rangle}

\title{Secure Anonymous Broadcast \bigskip{}
}

\author{Mahnush Movahedi\\
 movahedi@cs.unm.edu \and Jared Saia\\
 saia@cs.unm.edu \and Mahdi Zamani\\
 zamani@cs.unm.edu}

\date{{\small{Department of Computer Science, University of New Mexico,
Albuquerque, NM, USA 87131}}}
\maketitle
\begin{abstract}
In anonymous broadcast, one or more parties want to anonymously send
messages to all parties. This problem is increasingly important as
a black-box in many privacy-preserving applications such as anonymous
communication, distributed auctions, and multi-party computation.
In this paper, we design decentralized protocols for anonymous broadcast
that require each party to send (and compute) a polylogarithmic number
of bits (and operations) per anonymous bit delivered with $O(\log n)$
rounds of communication. Our protocol is provably secure against traffic
analysis, does not require any trusted party, and is completely load-balanced.
The protocol tolerates up to $n/6$ statically-scheduled Byzantine
parties that are controlled by a computationally unbounded adversary.
Our main strategy for achieving scalability is to perform local communications
(and computations) among a logarithmic number of parties. We provide
simulation results to show that our protocol improves significantly
over previous work. We finally show that using a common cryptographic
tool in our protocol one can achieve practical results for anonymous
broadcast.
\end{abstract}
\medskip{}

\section{Introduction}

\setcounter{page}{1}

{} Today, political and commercial entities are increasingly engaging
in sophisticated cyber-warfare to damage, disrupt, or censor information
content~\cite{cyberwarfare2012}. In designing anonymous communication
services, there is a need to ensure reliability even against very
powerful types of adversaries. Such adversaries can monitor large
portions of networks and control a certain fraction of the parties
to run sophisticated \emph{active} attacks such as jamming, corruption,
and forging, as well as simple \emph{passive} attacks such as eavesdropping
and non-participation.%

In this paper, we focus on the problem of secure anonymous broadcast,
where a set of $n$ parties want to anonymously send their messages
to all parties. Anonymous broadcast is an important tool for achieving
privacy in several distributed applications such as anonymous communication~\cite{chaum:81},
private information retrieval~\cite{Chor:1998:PIR:293347.293350},
secure auctions~\cite{stajano:99}, and multi-party computation (MPC).
Our goal is to design a decentralized anonymous broadcast protocol
that scales well with the number of parties and is robust against
an active adversary. One motivating application for this protocol
is a decentralized version of Twitter that enables provably-anonymous
broadcast of messages.

One challenging problem with most anonymity-based systems is resistance
against \emph{traffic-analysis}. A global adversary can monitor traffic
exchanged between parties to link messages to the their corresponding
senders. Such a powerful adversary was assumed to be unrealistic in
the past but it is believed to be realistic today especially if the
service provider is controlled or compromised by a state-level surveillance
authority~\cite{Feigenbaum:2014:Panopticon}. Unfortunately, well-known
anonymous services such as Crowds~\cite{Reiter:1998:CAW:290163.290168}
and Tor~\cite{dingledine:2004} are not secure against traffic analysis
attacks.%

Two widely-accepted architectures for providing general anonymity
against an active adversary are \emph{Mix networks (Mix-Nets)} and
\emph{Dining Cryptographers networks (DC-Nets)}, both of which were
originally proposed by Chaum~\cite{chaum:81,chaum88}. Mix-Nets require
semi-trusted infrastructure nodes and are known to be vulnerable to
traffic analysis and active attacks~\cite{pfitzmann:86}. DC-Nets~\cite{chaum88,Juels04,vonAhn03,Waidner:1990:DCD:111563.111630},
on the other hand, provide anonymous broadcast protocols among a group
of parties without requiring trusted parties. The core idea of DC-Nets
is that a protocol for multi-party computation can be used to perform
sender and receiver anonymous broadcast. For example, if party $p_{i}$
wants to broadcast a message $m_{i}$ anonymously, then all other
parties participate in a multi-party sum with input zero, while party
$p_{i}$ participates with input $m_{i}$. All parties learn the sum,
which is $m_{i}$ while all inputs remain private. This ensures that
no party can trace the output message $m_{i}$ to its input, keeping
$p_{i}$ anonymous. 

Although DC-Nets are provably-secure against traffic analysis, they
face several challenges. First, a reservation mechanism is required
to schedule which party is broadcasting without compromising the anonymity
of the sender. Second, DC-Nets are susceptible to collisions, which
degrade throughput. A jamming adversary may even use collisions to
render the channel useless by continuously transmitting in every round.
Third, typical DC-Nets are not scalable given that the bit complexity
required to anonymously broadcast a single bit among $n$ parties
is $\Omega(n^{2})$.

State-of-the-art approaches that address some of these challenges
include~\cite{Verdict:2013:usenix,Juels04,vonAhn03}. The majority
of these methods scale poorly with network size, rendering them impractical
for large networks. Recently, Zamani et al.~\cite{Zamani:2013:FOCI,Karlin:2013:BAS:2484239.2484290}
proposed the first anonymous broadcast protocol where each party sends
$o(n)$ bits to broadcast a bit among $n$ parties. Their protocol
uses multi-party computation to achieve full anonymity and logarithmic-size
groups of parties to achieve $\tilde{O}(1)$ communication and computation
costs. Unfortunately, their protocol has polylogarithmic rounds of
communication and is not practical due to large logarithmic factors
hidden in the complexity notation.

To the best of our knowledge, every sender and receiver anonymous
broadcast protocol that does not rely on a trusted party consists
of at least three steps.\vspace{1pt}

\begin{enumerate}
\item \textbf{Input: }Initially, each party holds a message. The party distributes
its message or a representation of it among all or a subset of parties.
This step requires sending $\Omega(n)$ messages.
\item \textbf{Multi-party shuffling: }All or a subset of parties participate
in a multi-party protocol to obliviously generate a random permutation
of the sequence of message they hold.
\item \textbf{Output:} All or a subset of parties holding a sequence of
messages broadcast them to all parties. This step requires sending
$\Omega(n^{2})$ messages for delivering $n$ shuffled messages.%

\end{enumerate}
Since much cannot be done to improve the cost of the output phase,
we will mainly focus on the multi-party shuffling step in this paper.
Multi-party shuffling can be used as a black-box in multi-party computation
problems. Boyle~et~al.~\cite{Boyle:2013:CLS:2450206.2450227} use
oblivious shuffling to randomly choose inputs for a sublinear function
evaluation, where a function is evaluated over $o(n)$ inputs chosen
uniformly at random in the presence of an active adversary. Laur~et~al.~\cite{Laur:2011:ObliviousDB}
and Goodrich~et~al.~\cite{Goodrich:2012:POS:2133601.2133604} describe
how multi-party shuffling can be used for implementing oblivious database
operations and oblivious storage.
\begin{description}
\item [{Our~Model.}] In this paper, we consider a network of $n$ parties
whose identities are common knowledge. We assume there is a private
and authenticated communication channel between every pair of parties
and the communication is \emph{synchronous}. Our protocol does not
require the presence of any trusted third party, and we do not assume
the existence of a reliable broadcast channel. We assume $t<(1/6-\epsilon)n$
of the parties are controlled by an \emph{active} adversary, for some
positive constant $\epsilon$. We assume our adversary is \emph{computationally
unbounded} and is actively trying to prevent the protocol from succeeding
by attacking the privacy of the parties, and the integrity of communications,
by attempting to corrupt, forge, or drop messages. We say that the
parties controlled by the adversary are \emph{dishonest} and that
the remaining parties are \emph{honest} meaning that they strictly
follow our protocol. We finally assume that the adversary is \emph{static}
meaning that it must select the set of dishonest parties at the start
of the protocol.
\end{description}

\subsection{Our Contribution}

In this paper, we design a decentralized protocol for anonymous communication
that is secure against an active adversary. Our protocol is efficient
and scales well with the number of parties. Moreover, our protocol
is load-balanced meaning that each party handles a roughly equal amount
of communication and computation. We use techniques from multi-party
computation, where a set of $n$ parties, each having a secret value,
compute a known function over their inputs, without revealing the
inputs to any party.

Recently, Boyle{\small{~}}et~al.~\cite{Boyle:2013:CLS:2450206.2450227}
and Dani~et~al.~\cite{DKMS-ICDCN-2014} proposed scalable solutions
to general MPC. Unfortunately, both of these protocols are not practical
due to large logarithmic and constant factors in their communication/computation
costs. Moreover, the protocol of Boyle~et al. is not load-balanced
making it hard to be used in settings like mobile networks where the
parties have limited resources. Despite their inefficiency, we are
inspired by \cite{Boyle:2013:CLS:2450206.2450227} and \cite{DKMS-ICDCN-2014}
to achieve scalability by performing local communications and computations
in logarithmic-size groups of parties called \emph{quorum}s, where
the number of dishonest parties in each quorum is guaranteed not to
exceed a certain fraction. Using quorums and by simplifying much of
their work, we develop an efficient multi-party shuffling protocol
for anonymizing user inputs.

Our protocol is provably-secure as it is based on a formal security
framework, which follows from the security of MPC. We show that the
anonymity achieved by this method is, in particular, resistant to
traffic analysis. We also provide provable anonymity against \emph{a
priori} knowledge that an adversary might have regarding the potential
communicating parties. Moreover, unlike the majority of previous work
which rely on centralized trusted servers, our protocol is fully-decentralized
and does not require any trusted party.

Our protocol has polylogarithmic communication and computation costs
with respect to the number of parties. We prove the following main
theorem in Section~\ref{sec:Security-Proofs}.
\begin{thm}
\label{thm:main} Consider $n$ parties in a fully connected network
with private channels where each party has a message to send to all
parties. There exists an unconditionally-secure $n$-party protocol
such that if all honest parties follow the protocol, then with high
probability\hspace{0.003\textwidth}%
\footnote{An event occurs \emph{with high probability}, if it occurs with probability
$\geq1-1/n^{c}$, for $c>0$ and sufficiently large $n$.%
}:\vspace{1pt}

\begin{itemize}
\item Each honest party sends its message to all parties anonymously.
\item The protocol tolerates up to $t<(1/6-\epsilon)n$ malicious parties,
for some positive constant $\epsilon$.
\item Each party sends $\tilde{O}(1)$ bits and computes $\tilde{O}(1)$
operations for shuffling $n$ messages\hspace{0.003\textwidth}%
\footnote{The $\tilde{O}$ notation is used as a variant of the big-O notation
that ignores logarithmic factors. Thus, $f(n)=\tilde{O}(g(n))$ means
$f(n)=O(g(n)\log^{k}{g(n)})$ for some $k$.%
}.
\begin{spacing}{0.69999999999999996}
\item Each party sends $\tilde{O}(1)$ bits and computes $\tilde{O}(1)$
operations for delivering one anonymous bit.\smallskip{}

\item The protocol has $O(\log n)$ rounds of communication.\end{spacing}

\end{itemize}
\end{thm}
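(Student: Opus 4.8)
The plan is to build the protocol in three phases, matching the input/shuffle/output decomposition described above, and then prove the five claims by composing the guarantees of the pieces. First I would set up the infrastructure: use a quorum-building primitive (e.g., the almost-everywhere or everywhere-to-everywhere quorum constructions used in scalable MPC) to partition the $n$ parties into $\Theta(n/\log n)$ quorums of size $\Theta(\log n)$, each containing at most a $1/3$ fraction of dishonest parties (with high probability, given $t<(1/6-\epsilon)n$), and arrange the quorums as the nodes of a bounded-degree network such as a butterfly with $O(\log n)$ layers. Each quorum internally runs an unconditionally-secure MPC subroutine (the honest-majority $\tilde O(1)$-per-gate protocols we are inspired by) so that a quorum can act as a single trusted party at polylogarithmic cost. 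For the input phase, each party secret-shares its message into its home quorum; since there are $n$ messages and each goes to a $\Theta(\log n)$-size quorum, this costs $\tilde O(1)$ bits per party. For the shuffling phase, I would route the shared messages through the butterfly of quorums, and at each of the $O(\log n)$ switching layers the relevant pair (or small set) of quorums jointly and obliviously applies a random switching/permutation on the shares it holds, using its internal MPC to generate the needed randomness and permute without revealing it; the composition of the layer-wise random switches yields a (close to) uniformly random permutation of the $n$ messages, held in secret-shared form. For the output phase, the final quorums reconstruct the shuffled messages and broadcast them to all $n$ parties; this is the unavoidable $\Omega(n^2)$-message step, but amortized it is $\tilde O(1)$ bits per party per anonymous bit delivered.

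Next I would verify the quantitative bounds. Round complexity: quorum formation is $O(\log n)$ rounds (or a lower-order term depending on the primitive used), routing through the butterfly is $O(\log n)$ layers with $O(1)$ (or $\mathrm{polylog}$, hence still within the stated $O(\log n)$ after pipelining) rounds of intra-quorum MPC per layer, and reconstruction/broadcast is $O(1)$ rounds; so the total is $O(\log n)$. Communication and computation: every message touched by a quorum incurs only $\mathrm{poly}(\log n)$ overhead from the internal MPC, each party belongs to $O(1)$ (or $\mathrm{polylog}$) quorums and each quorum processes $O(\mathrm{polylog} n)$ messages at each of $O(\log n)$ layers, so each party sends and computes $\tilde O(1)$ per shuffled message and hence $\tilde O(1)$ per delivered anonymous bit; load-balance follows because quorums have equal size and the butterfly is symmetric. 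The tolerance bound $t<(1/6-\epsilon)n$ comes from requiring every quorum to have honest supermajority: the quorum-sampling argument turns a global $1/6-\epsilon$ fraction into a per-quorum fraction below $1/3$ with high probability, which is what the internal unconditionally-secure MPC needs.

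For anonymity I would argue in the real/ideal simulation framework. The ideal functionality takes each party's message and outputs the multiset of messages (equivalently, a uniformly random permutation of them) to everyone; I must exhibit a simulator that, given only the adversary's view of corrupted parties and the ideal output, reproduces the real transcript. The key observations are: (i) in the input and shuffling phases the adversary sees only secret shares below the quorum threshold, which are simulatable as uniform random values by the privacy of the MPC scheme; (ii) the permutation applied is never revealed — each honest switching quorum's choice is hidden inside MPC, and the butterfly topology guarantees that as long as at least one honest switch lies on the relevant path, the induced permutation is uniform and independent of the inputs from the adversary's viewpoint; (iii) the only values revealed in the clear are the final shuffled messages, which the simulator copies from the ideal output. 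Composing these via a standard hybrid argument over the $O(\log n)$ layers gives statistical indistinguishability, and in particular yields security against traffic analysis (the adversary's observation of who-sends-to-whom is exactly the fixed butterfly pattern, independent of inputs) and against \emph{a priori} knowledge (the output permutation is uniform conditioned on the message multiset only).

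The main obstacle I anticipate is the shuffling-phase analysis: proving that $O(\log n)$ layers of \emph{local, quorum-restricted} random switches — rather than one global random permutation — produce a permutation that is statistically close to uniform, and simultaneously that this permutation leaks nothing to the adversary even though some switching quorums are partially corrupted and even though shares must be re-randomized and forwarded correctly between quorums at every layer. Handling the interaction between correctness (a faulty quorum must not be able to inject, drop, or duplicate shares — this needs verifiable secret sharing plus error detection inside each quorum) and anonymity (the honest switches must statistically dominate so the composed permutation is uniform) is where the bulk of the technical work, and the precise constant in the $1/6$ bound, will come from; everything else is a careful but routine accounting of costs and a hybrid composition of known MPC guarantees.
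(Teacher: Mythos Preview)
Your overall architecture (quorums, per-gate MPC, layered evaluation, output reconstruction) matches the paper, but the core shuffling mechanism you propose has a real gap. A butterfly of depth $O(\log n)$ whose $2\times 2$ switches are each flipped by an independent random bit does \emph{not} produce a distribution close to uniform over $S_n$: the network realizes at most $2^{(n/2)\log n}=n^{n/2}\ll n!$ permutations, and even among those the induced distribution is far from flat. You correctly flag this as ``the main obstacle,'' but you do not close it, and the standard fixes (Bene\v{s}/Waksman routing for a pre-chosen random $\pi$, or iterating random matchings until mixing) either require solving a routing problem inside MPC or blow up the depth beyond $O(\log n)$. The paper sidesteps the problem by a different idea you are missing: it never tries to compose local random switches at all. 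Instead, each input $x_i$ is tagged with a jointly generated secret-shared random value $r_i$, and the quorums evaluate a \emph{fixed, deterministic} sorting network (Leighton--Plaxton, depth $O(\log n)$) on the pairs $(r_i,x_i)$, using a secure comparison subprotocol at each comparator. Sorting by distinct uniform tags yields an exactly uniform permutation (the paper's Lemma~\ref{lem:permutation}), and a simple collision bound (Lemma~\ref{lem:Collision}) shows that taking $r_i$ from a field of size $\Omega(n^2\log n)$ makes the tags distinct with high probability. So the randomness lives in the data, not in the circuit topology, and uniformity is immediate.

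A secondary inaccuracy: your explanation of the $1/6$ threshold is not quite the paper's. It is not that a global $1/6$ fraction is amplified to a per-quorum $1/3$; rather, the quorum construction yields at most a $1/6$ fraction dishonest \emph{per quorum}, and the factor of two is spent when a share is handed from one quorum $Q$ to the next $Q'$: the adversary sees the dishonest shares in both, up to $N/6+N/6=N/3$, which is exactly the VSS privacy threshold. Relatedly, the paper's \textsf{Multiply} generates the Beaver-triple masks $u,v$ on degree-$N/6$ polynomials so that $w=uv$ lands on degree $N/3$ without a separate degree-reduction step; this is where the constant actually bites.
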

\begin{description}
\item [{Protocol~Overview.}] In our protocol, $n$ parties participate
in a MPC to jointly compute a shuffling function that randomly permutes
their messages. Then, the results are broadcast to all participating
parties. More specifically, our protocol builds a set of quorums in
a one-time setup phase and then, uses the quorums in the online phase
for shuffling input messages. We represent the desired shuffling function
by an arithmetic circuit, where the computation of each gate is assigned
to a quorum. Then, we evaluate the circuit level-by-level, passing
the outputs of one level as the inputs to the next level. Once the
local computation is finished in each quorum, the result is forwarded
to the next quorum via one-to-one communication with parties of the
next quorum. Finally, at the highest level, the shuffled messages
are computed and sent to all parties.
\end{description}
One technique for randomly shuffling a set of messages in a multi-party
setting is to assign to each message a uniform random value and then,
obliviously sort the messages according to the random values. One
issue with this technique is that if the randoms are not distinct
(i.e., there is a \emph{collision}), then the resulting distribution
deviates from the uniform distribution. Unfortunately, this gives
the adversary some advantage to map the inputs to outputs and thus,
break the anonymity. More formally, let $n$ be the number of messages
and $[1,M]$ be the range of random values we choose for each message.
With $n$ such random values, we can generate $M^{n}$ possible states
while there are $n!$ possible permutations of the messages. Since
$n!\nmid M^{n}$, it is easy to see that some permutations are more
likely to be generated than others. On the other hand, we will show
that by choosing a sufficiently large $M$, we can prevent collisions
with high probability. Using the Chernoff bound, we will prove that
choosing $M=\Omega(kn^{2}\log n)$ guarantees a uniform random shuffle
with probability $1-1/n^{k}$, for any constant $k>0$.

\section{Further Related Work \label{sec:relatedWork}}

Some protocols are built upon a relaxed notion of anonymity called
\emph{$k$-anonymity}\ \cite{vonAhn03,Yao:2005:Anonymity,LeBlond:2013:TET:2486001.2486002},
where the adversary is assumed to be unable to identify the actual
sender/receiver of a message from a set of $k$ parties (called \emph{anonymity
set}). Even though \emph{$k$-anonymity} often increases efficiency
significantly, choosing small $k$'s can result in severe privacy
problems. For example, attackers often have background knowledge and
it is shown that small anonymity sets are likely to leak privacy when
attackers have such knowledge~\cite{Machanavajjhala:2007:LDP:1217299.1217302}.
For example, a person located in New Mexico is more likely to search
for a restaurant serving chili stew than a person in Vermont.

Von Ahn et al.~\cite{vonAhn03} develop a cryptographic broadcast
protocol based on DC-Nets that is resistant to a static active adversary.
A set of $n$ parties with private inputs compute and share the sum
of their inputs without revealing any parties' input. The authors
introduce $k$-anonymity, which means no polynomial-time adversary
may distinguish the sender/receiver of a message from among $k$ honest
senders/receivers. To achieve $k$-anonymity, they partition the set
of parties into groups of size $M=O(k)$ and execute a multi-party
sum protocol inside each group. The jamming detection mechanism is
weak against an adversary who may waste valuable resources by adaptively
filling up to $M$ channels. In the case where $n$-anonymity is desired,
the protocol requires $O(n^{3})$ messages to be sent per anonymous
message and the total bit complexity is $O(n^{4})$. The protocol
has latency that is $O(1)$ on average when the number of broadcasts
is large, but which can be $O(n)$ in worst case for a single broadcast.

Golle and Juels~\cite{Juels04} employ cryptographic proofs of correctness
to solve the jamming problem in DC-Nets assuming a static Byzantine
adversary. The protocol detects jamming with high probability in $O(1)$
rounds, requiring a total communication and computation complexity
of $O(n^{2})$ bits. Their protocol assumes the existence of a reliable
broadcast and a centralized trusted authority for key management distribution.

The Xor-trees approach of~\cite{Dolev:2000:XEA:354876.354877} extends
DC-Nets to achieve $O(n)$ amortized bit complexity, which is optimal.
In this protocol, only a single user is allowed to send at any one
time in a Xor-tree. Hence, the protocol is subject to performance
degradation due to collisions as the number of users increases. The
protocol assumes the existence of a public-key infrastructure and
a non-Byzantine polynomial-time adversary. The total bit complexity
of the protocol is $O(n^{2}t^{2})$ bits in worst case, where $t$
is the number of dishonest parties. The latency of the protocol is
$O(n)$ in worst case. However, a sender may broadcast large payloads
to amortize the costs. The amortized latency of the protocol is $O(1)$.

The Verdict protocol of~\cite{Verdict:2013:usenix} (which is based
on Dissent~\cite{dissent:10}) has a client-server architecture and
uses verifiable DC-Nets, where participants use public-key cryptography
to construct ciphertext, and knowledge proofs to detect and exclude
jamming parties before disruption. The protocol assumes the existence
of a few highly-available servers, where at least one server is honest.
All servers must be alive, however, for the protocol to work. An interesting
aspect of Verdict is that it is robust to a large fraction of Byzantine
parties (up to $n-2$). The paper demonstrates empirically that the
system scales well with the number of parties, when the number of
servers is fixed. The Tarzan protocol of Freedman and Morris~\cite{Freedman:2002:TPA:586110.586137}
provides resistance against traffic analysis, but only against a passive
adversary.

The Aqua protocol of Le Blond~et~al.~\cite{LeBlond:2013:TET:2486001.2486002}
provides $k$-anonymity with traffic-analysis resistance against passive
global attacks and active local attacks. The protocol achieves anonymity
in a way similar to Tor (onion routing) and achieves unobservability
through traffic obfuscating, which is to add artificial delay or artificial
traffic (called chaff) to the connection. 

In the last three decades, a large body of work has been devoted to
designing MPC protocols~\cite{bgw88,Beaver:1991,gennaro1998simplified,damgard2008scalable}.
Unfortunately, most of these protocols are inefficient and scale poorly
with the number of parties. Several MPC sorting and shuffling protocols
have been proposed in the literature~\cite{Goodrich:2011:RSS:2049697.2049701,Laur:2011:ObliviousDB,Zhang:2011,Hamada:2013:MPCSorting}.
Laur~et~al.~\cite{Laur:2011:ObliviousDB} describe a multi-party
shuffling protocol that can be used for anonymizing a set of inputs.
Although the communication and round complexity of their protocol
scales well with the number of inputs, they scale exponentially with
the number of parties and hence, the method cannot be used in our
model, where $n$ is relatively large. 

Goodrich\ \cite{Goodrich:2011:RSS:2049697.2049701} proposes an efficient
data-oblivious randomized shellsort algorithm. Unfortunately, when
implemented in a multi-party setting, this protocol requires $O(m)$
rounds of communication to sort $m$ values and has communication
complexity $O(\ell^{2}n^{2}m\log m)$, where $\ell$ is the message
size. Zhang~\cite{Zhang:2011} and Hamada~et~al.~\cite{Hamada:2013:MPCSorting}
develop constant-round MPC sorting protocols that scale well with
the number of inputs but scale poorly with the number of parties.

\section{Preliminaries \label{sec:preliminaries}}

In this section, we define standard terms, notation, and results used
throughout the paper.

\subsection{Notation}

An event occurs \emph{with high probability}, if it occurs with probability
at least $1-1/n^{c}$, for any $c>0$ and all sufficiently large $n$.
We denote the set of integers $\{1,...,n\}$ by $[n]$. Also, let
$\mathbb{Z}_{p}$ denote the additive group of integers modulo a prime
$p$.

\subsection{Basic Tools \label{sec:Basic-Tools}}

In this section, we review the definitions of standard basic tools
used throughout the paper.
\begin{description}
\item [{Verifiable~Secret~Sharing.}] An \emph{$(n,t)$-secret sharing
}scheme, is a protocol in which a dealer who holds a secret value
shares the secret among $n$ parties such that any set of $t$ parties
cannot gain any information about the secret, but any set of at least
$t+1$ parties can reconstructs it. An \emph{$(n,t)$-verifiable secret
sharing (VSS)} scheme is an \emph{$(n,t)$}-secret sharing scheme
with the additional property that after the sharing phase, a dishonest
dealer is either disqualified or the honest parties can reconstruct
the secret, even if shares sent by dishonest parties are spurious.
In our protocol, we use the constant-round VSS protocol of Katz~et~al.~\cite{Katz:2008:VSS}
that is based on Shamir's secret sharing scheme~\cite{shamir:how}.
This result is described in Theorem~\ref{thm:katz-vss}.%
{} %
\end{description}
\begin{thm}
\label{thm:katz-vss}\emph{\cite{Katz:2008:VSS}} There exists a synchronous
linear\emph{ $(n,t)$}-VSS scheme for $t<n/3$ that is perfectly-secure
against a static active adversary. The protocol requires one broadcast
and three rounds of communication.\smallskip{}

\end{thm}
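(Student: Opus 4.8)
The plan is to exhibit an explicit scheme of the BGW type built on bivariate polynomials over $\zp$ and then verify the three required properties (secrecy, commitment, reconstruction), arranging the messages so that the sharing phase costs exactly three rounds with a single broadcast. I would represent the secret $s\in\zp$ by a random symmetric bivariate polynomial $F(x,y)=\sum_{0\le a,b\le t} c_{ab} x^a y^b$ with $c_{ab}=c_{ba}$ and $F(0,0)=s$, so that party $P_i$'s share is the univariate restriction $f_i(y)=F(i,y)$, a degree-$t$ polynomial. The symmetry gives the basic pairwise consistency relation $f_i(j)=F(i,j)=F(j,i)=f_j(i)$, which is exactly what honest parties will cross-check. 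Because each $f_i$ is a linear function of the coefficient vector $(c_{ab})$, and $s$ is one of those coefficients, the scheme is \emph{linear}: shares of $s+s'$ are the pointwise sums of the shares of $s$ and of $s'$, which is the property the later MPC layer needs.

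For the round structure I would use: (R1, private) the dealer sends $f_i(\cdot)$ to each $P_i$, and concurrently each ordered pair $(P_i,P_j)$ exchanges a uniformly random $\zp$ element so that they jointly hold a common symmetric pad $\rho_{ij}=\rho_{ji}$; (R2, private) each $P_i$ sends $P_j$ the blinded value $f_i(j)+\rho_{ij}$ and flags a disagreement with $P_j$ whenever $f_i(j)\ne f_j(i)$ (the pad cancels in the comparison, so no share information leaks); (R3, broadcast) every party broadcasts the list of indices with which it disagreed, the dealer broadcasts the true values $F(i,j)$ for every complained pair, and each party locally recomputes the induced consistency graph $G$ on $[n]$, an edge $\{i,j\}$ recording each pair that is mutually consistent after the dealer's public resolution. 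The dealer is disqualified iff $G$ fails to contain a suitably large, well-connected set; otherwise each honest party outputs the share it is left holding.

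The verification then splits into three checks. For \emph{perfect secrecy}, when the dealer is honest I would show that the joint view of any $t$ corrupted parties --- their restrictions $f_{i_1},\dots,f_{i_t}$ together with all blinded check values --- is identically distributed for every choice of $s$; this reduces to the classical fact that $t$ restrictions of a random degree-$t$-in-each-variable symmetric bivariate polynomial are independent of $F(0,0)$, while the pads turn the round-2 traffic into one-time-pad ciphertext that a simulator can manufacture from scratch. For \emph{reconstruction}, since $t<n/3$ gives $n\ge 3t+1$, the $\ge 2t+1$ honest shares determine the committed polynomial and tolerate the $\le t$ spurious ones via Reed--Solomon error correction (Berlekamp--Welch), so the honest secret is recovered. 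For \emph{commitment} (the dishonest-dealer case), I would prove that whenever the dealer survives disqualification, the honest parties' retained shares all lie on one and the same degree-$t$ bivariate polynomial, so a unique secret is fixed at the end of sharing regardless of later behaviour.

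I expect the commitment argument to be the main obstacle, precisely because it must be settled with only the \emph{one} available broadcast round. The crux is a graph-theoretic lemma: once the dealer has publicly resolved all complaints, the $n-t\ge 2t+1$ honest parties should form a pairwise-consistent set, and one must show that a disqualification rule stated purely in terms of publicly observable data (complaints plus the dealer's broadcast resolutions) still guarantees that every non-disqualified run leaves an honest core of $\ge 2t+1$ mutually consistent shares --- enough to interpolate a unique $F$ and to pin down the remaining honest parties' shares to agree with it. The delicate point is that a cheating dealer may answer complaints selectively and may have handed out individually valid but globally inconsistent rows; ruling this out in a single public round forces a careful choice of the connectivity threshold on $G$ so that passing it is simultaneously achievable by an honest dealer (completeness) and impossible without committing to a genuine degree-$t$ polynomial (soundness). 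Making those two requirements meet exactly at $t<n/3$ is where the real work lies; everything else is standard polynomial-interpolation and one-time-pad bookkeeping.
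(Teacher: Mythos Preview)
The paper does not prove this statement. Theorem~\ref{thm:katz-vss} is quoted from \cite{Katz:2008:VSS} and invoked as a black-box building block; the only ``proof'' the paper offers is the citation itself. There is thus nothing here to compare your proposal against --- you have gone well beyond what the present paper does by trying to reconstruct the Katz--Koo--Kumaresan argument.

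On the substance of your reconstruction: the symmetric bivariate sharing, the random pairwise pads, the cross-checks $f_i(j)\stackrel{?}{=}f_j(i)$, and your identification of the single-broadcast commitment argument as the crux are all on target. But your round schedule has a concrete gap. In your round~3 you have each party broadcasting its complaint list \emph{and} the dealer broadcasting the values $F(i,j)$ ``for every complained pair'', all in the same synchronous round. In a synchronous protocol the dealer must fix its round-3 message before seeing anyone else's round-3 message, so it cannot condition its openings on complaint lists that are being announced in that very round. As written, the dealer does not know which pairs to open, and you need a fourth round (or a second broadcast) for the response --- which breaks the ``three rounds, one broadcast'' specification you are trying to meet. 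The actual construction avoids this by arranging that the dealer already has enough private information by the end of round~2 to anticipate every cross-check that can fail (e.g., by having each party forward its pairwise values to the dealer as well as to its peer in round~2), so that the dealer's round-3 openings and the parties' round-3 complaints can be produced independently and then reconciled locally as part of the disqualification rule. Your outline needs that adjustment before the round count works out.
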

For practical purposes, one can use the cryptographic VSS scheme of
Kate~et al.~\cite{asiacrypt-2010-23846} called \emph{eVSS} (stands
for efficient VSS), which is based on Shamir's scheme and the hardness
of the Discrete Logarithm (DL) problem. Since eVSS generates commitments
over elliptic curve groups, it requires smaller message sizes than
other DL-based VSS scheme such as~\cite{gennaro1998simplified}.\smallskip{}

\begin{thm}
\label{thm:evss}\emph{\cite{asiacrypt-2010-23846}} There exists
a synchronous linear\emph{ $(n,t)$}-VSS scheme for $t<n/2$ that
is secure against a computationally-bounded static adversary. In worst
case, the protocol requires two broadcasts and four rounds of communication.\end{thm}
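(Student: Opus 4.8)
The plan is to prove the theorem by exhibiting the scheme explicitly, namely the eVSS construction of~\cite{asiacrypt-2010-23846} built from Shamir's secret sharing together with the constant-size polynomial commitment (KZG), and then to verify the three attributes hidden in the phrase ``linear $(n,t)$-VSS''---secrecy, verifiability, and linearity---while bookkeeping the rounds and broadcasts. First I would have the dealer, holding a secret $s\in\zp$, sample a uniformly random polynomial $\phi(x)=s+a_{1}x+\dots+a_{t}x^{t}$ of degree $t$ and define party $p_{i}$'s share to be $\phi(i)$. The dealer then forms a single-group-element commitment $C=\mathcal{C}(\phi)$ to the \emph{entire} polynomial (made hiding by committing in a Pedersen-style way over the generators $\langle g,h\rangle$ with an auxiliary random polynomial $\hat\phi$), and, for each $i$, a constant-size \emph{witness} $w_{i}$ attesting that the committed polynomial evaluates to $\phi(i)$ at the point $i$. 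Concretely $w_{i}$ is a commitment to the quotient $(\phi(x)-\phi(i))/(x-i)$, which is an honest polynomial precisely because $i$ is a root of the numerator; a party verifies it through the pairing identity $e(C,g)=e(w_{i},g^{\alpha}g^{-i})\,e(g,g)^{\phi(i)}$.

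Second, I would lay out the message schedule to land on the claimed budget. In the first round the dealer broadcasts $C$ (the first broadcast) and privately sends each $p_{i}$ the pair $(\phi(i),w_{i})$ together with its hiding component; each party then checks the verification identity locally. A party whose check fails broadcasts a complaint (the second broadcast), and the dealer resolves each complaint by revealing the disputed share and witness, after which every party finalizes its view and either accepts or disqualifies the dealer. Tracking these messages yields two broadcasts and four rounds in the worst case, matching the statement. The secrecy and linearity properties are the easier ones: secrecy holds because any $t$ evaluations of a degree-$t$ Shamir polynomial are independent of $s$, so the only extra leakage is $C$ and the witnesses, which the hiding property (under the DL assumption) shows reveal nothing beyond the shares---this is exactly where the auxiliary polynomial $\hat\phi$ is needed, giving the ``computationally-bounded'' qualifier. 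Linearity is immediate from the homomorphism $\mathcal{C}(\phi_{1}+\phi_{2})=\mathcal{C}(\phi_{1})\cdot\mathcal{C}(\phi_{2})$ and the corresponding additivity of witnesses, so shares of a sum of secrets are local sums of shares, verifiable against the product of commitments; this is what makes the scheme usable inside the MPC of the main theorem.

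The hard part will be the verifiability (binding) argument, which is where the $t$-SDH and $t$-polyDH assumptions enter, since I must rule out a cheating dealer who makes honest parties accept shares that later reconstruct to an ill-defined or inconsistent secret. I would argue in two steps. For evaluation binding, suppose the dealer produced two valid witnesses opening $C$ at a single index $i$ to distinct values $v\neq v'$; dividing the two verification identities isolates $e\bigl((w/w')^{1/(v'-v)},g^{\alpha}g^{-i}\bigr)=e(g,g)$, which exhibits the element $g^{1/(\alpha-i)}$ and therefore solves a $t$-SDH instance---so every accepted share at point $i$ is the unique committed evaluation. For polynomial binding (under $t$-polyDH), the commitment $C$ pins down a unique polynomial of degree at most $t$, so the accepted shares of all non-complaining honest parties provably lie on one degree-$t$ polynomial.

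Finally I would close the reconstruction argument using $t<n/2$, which guarantees at least $t+1$ honest parties. Either the dealer answers every honest complaint with a correctly verifying share and witness---in which case all honest parties hold points on the single committed polynomial---or it fails to do so and is disqualified by the honest majority. Reconstruction then interpolates any $t+1$ verified shares, and binding forces every such subset to agree, yielding the VSS guarantee that a dishonest dealer is either disqualified or its secret is uniquely recoverable. The remaining work is routine: checking the pairing algebra of the quotient-witness identity, confirming that the hiding reduction is tight, and verifying that the complaint-resolution step fits within the two-broadcast, four-round budget.
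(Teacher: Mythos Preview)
The paper does not prove this theorem at all: it is stated purely as a citation of the result in~\cite{asiacrypt-2010-23846} and used as a black box, so there is no ``paper's own proof'' to compare against. Your proposal is a faithful and technically sound reconstruction of the eVSS scheme from that reference---Shamir sharing plus KZG polynomial commitments with Pedersen-style hiding, complaint-based dispute resolution, and binding via the $t$-SDH/$t$-polyDH reductions---so it matches the argument one would find in the cited source, but it goes well beyond what this paper requires or provides.
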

\begin{description}
\item [{Quorum~Building.}] A \emph{good quorum} is a set of $N=O(\log n)$
parties that contains a majority of honest parties. King et al.~\cite{ICDCN11}
showed that a \emph{Byzantine Agreement (BA) }protocol can be used
to bring all parties to agreement on a collection of $n$ good quorums.%
{} In this paper, we use the fast BA protocol of Braud-Santoni~et~al.~\cite{Braud-Santoni:2013:FBA:2484239.2484243}
to build $n$ good quorums.

\begin{thm}
\label{thm:quorum-building}\emph{\cite{ICDCN11,Braud-Santoni:2013:FBA:2484239.2484243}}
There exists a constant-round unconditionally-secure protocol that
brings all good parties to agreement on $n$ good quorums with high
probability. The protocol has $\tilde{O}(n)$ amortized communication
and computation complexity%
\footnote{Amortized communication complexity is the total number of bits exchanged
divided by the number of parties.%
}, and it can tolerate up to $t<(1/3-\epsilon)n$ malicious parties.
\vspace{-5pt}

\end{thm}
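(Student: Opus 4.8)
The plan is to recall the construction behind the cited results of \cite{ICDCN11} and \cite{Braud-Santoni:2013:FBA:2484239.2484243}, which reduces scalable quorum building to an \emph{almost-everywhere} Byzantine agreement run over a sparse communication overlay. First I would fix a communication graph of degree $\mathrm{polylog}(n)$ on the $n$ parties --- a random or explicitly constructed expander / averaging-sampler graph --- so that the total number of edges, and hence the amortized communication, is $\tilde{O}(n)$. The property this graph must satisfy is a sampler bound: for every adversarial set $B$ with $|B| \leq t < (1/3-\epsilon)n$, all but an $o(1)$ fraction of the honest parties have a neighborhood that still contains the honest supermajority needed for a local agreement step, which follows from the expander mixing lemma applied to $B$. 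No cryptographic assumption enters, so the resulting protocol is unconditionally secure.

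On this overlay I would run a layered, Bracha-style agreement in which each party repeatedly exchanges its current value with its neighbors and updates by supermajority, with each elementary vote executed inside an $O(\log n)$-size group using a constant-round agreement primitive (for instance, one built from the VSS of Theorem~\ref{thm:katz-vss}). After $O(1)$ such phases the sampler property guarantees that all but $o(n)$ honest parties hold a common value --- in particular a common random seed extracted from the protocol transcript via the full-information / bit-fixing machinery of \cite{ICDCN11}. Using this seed I would deterministically assign the parties to $n$ sets of size $N = \Theta(\log n)$; a Chernoff bound shows that each such set contains at most $(1/3-\epsilon)N < N/2$ corrupted parties except with probability $n^{-\Omega(1)}$, and a union bound over the $n$ sets gives the high-probability guarantee. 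Within each of these sets the honest supermajority then lets every member learn the correct membership of all $n$ quorums, which upgrades the almost-everywhere agreement to everywhere agreement and absorbs the $o(n)$ confused parties; tallying the costs --- $\tilde{O}(n)$ edges, $\tilde{O}(1)$ bits per edge, $O(1)$ phases of $O(1)$ rounds each, plus purely local sampling --- yields $O(1)$ rounds and $\tilde{O}(n)$ amortized communication and computation.

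The main obstacle is the interface between the agreement and the partitioning steps: the adversary sees the entire randomized (full-information) protocol and will try to bias the extracted seed so as to concentrate corrupted parties into a few quorums, and defeating this while keeping the per-party load polylogarithmic is exactly where the refined gossip and averaging-sampler constructions of \cite{ICDCN11,Braud-Santoni:2013:FBA:2484239.2484243} do the heavy lifting. Reproducing their analysis with the stated resilience $t < (1/3-\epsilon)n$ and the right constants is the crux --- and it is the reason we invoke the theorem as a black box rather than reprove it here.
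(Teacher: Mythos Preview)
The paper does not prove Theorem~\ref{thm:quorum-building} at all; it is stated in Section~\ref{sec:Basic-Tools} as a black-box tool and attributed entirely to \cite{ICDCN11,Braud-Santoni:2013:FBA:2484239.2484243}. There is therefore no ``paper's own proof'' to compare against. Your write-up is a reasonable high-level summary of the sampler/expander plus almost-everywhere-to-everywhere agreement pipeline that those references develop, and you yourself acknowledge in the last paragraph that the actual analysis is delegated to the cited works. In that sense your proposal and the paper are aligned: both treat the result as imported, and neither supplies an independent proof.

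If the intent was to include a self-contained argument here, be aware that your sketch still leaves the hard part --- extracting a common random seed in the full-information model that the adaptive adversary cannot bias, while keeping the round count $O(1)$ and the per-party cost $\tilde{O}(1)$ --- to the citations; the Chernoff-plus-union-bound step for quorum goodness only works once that seed is in hand. But as a pointer to how the cited constructions operate, the outline is accurate enough for this paper's purposes.
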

\item [{Secure~Broadcast.}] In the malicious setting, when parties have
only access to secure pairwise channels, a protocol is required to
ensure secure (reliable) broadcast%
\footnote{We are not aware of any easy approach to physically implement a broadcast
channel without assuming a trusted party.%
}. Such a broadcast protocol guarantees all parties receive the same
message even if the broadcaster (dealer) is dishonest and sends different
messages to different parties. It is known that a BA protocol can
be used to perform secure broadcasts.%
{} In our protocol, we use the BA algorithm of Braud-Santoni~et al.~\cite{Braud-Santoni:2013:FBA:2484239.2484243}
to perform secure broadcasts. 

\begin{thm}
\emph{\label{thm:broadcast}\cite{Braud-Santoni:2013:FBA:2484239.2484243}}
There exists a constant-round unconditionally-secure protocol for
performing secure broadcasts among $n$ parties. The protocol has
$\tilde{O}(n)$ amortized communication and computation complexity,
and it can tolerate up to $t<(1/3-\epsilon)n$ malicious parties.\vspace{-3pt}

\end{thm}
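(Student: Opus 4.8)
The plan is to derive this statement as a corollary of the efficient Byzantine Agreement (BA) protocol of Braud-Santoni et al.\ \cite{Braud-Santoni:2013:FBA:2484239.2484243}, the same protocol that underlies Theorem~\ref{thm:quorum-building}, by invoking the classical reduction from secure broadcast to BA. Recall that a BA protocol, in which every party $p_i$ holds an input $v_i$, guarantees two properties against up to $t<(1/3-\epsilon)n$ malicious parties: (\emph{agreement}) all honest parties output a common value, and (\emph{validity}) if every honest party's input equals some value $v$, then the common output is $v$. The protocol of \cite{Braud-Santoni:2013:FBA:2484239.2484243} achieves these properties with high probability in a constant number of rounds, using $\tilde{O}(n)$ amortized communication and computation, and without any computational assumptions.

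Given such a BA primitive, I would build the broadcast protocol in two phases. First, the dealer $D$ sends its message $m$ over its private channels to each of the $n$ parties; let $v_i$ denote the value received by $p_i$, which may be arbitrary or missing if $D$ is dishonest and equivocates. Second, all parties invoke the BA protocol with inputs $v_1,\ldots,v_n$ and output the resulting agreed value. Correctness then follows by discharging the two broadcast obligations against the two BA guarantees. Agreement for broadcast (all honest parties output the same value even when $D$ is dishonest) is immediate from BA agreement. For validity, observe that when $D$ is honest it sends the same $m$ to everyone, so every honest party enters BA with input $m$; BA validity then forces the common output to be $m$, as required. Both conclusions hold with high probability, inheriting the failure probability of the randomized BA protocol.

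It remains to account for the resources. The first phase adds a single round and a total of $O(n\ell)$ bits for an $\ell$-bit message, i.e.\ $O(\ell)$ amortized, which is dominated by the BA invocation whenever $\ell$ is polylogarithmic. The second phase is constant-round and $\tilde{O}(n)$ amortized by the properties of \cite{Braud-Santoni:2013:FBA:2484239.2484243}, and the fault-tolerance threshold $t<(1/3-\epsilon)n$ is inherited verbatim. Summing over the two phases preserves both constant round complexity and $\tilde{O}(n)$ amortized communication and computation.

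The main obstacle is not the reduction, which is essentially folklore, but the near-linear and constant-round BA protocol it relies on; all of the technical difficulty --- in particular the random committee sampling that avoids the quadratic cost of naive all-to-all agreement --- lives inside \cite{Braud-Santoni:2013:FBA:2484239.2484243}. I would also flag the role of the $\epsilon$ slack: the threshold cannot be pushed to exactly $n/3$, since unconditionally-secure broadcast without setup is impossible at $t \geq n/3$ by the classical Lamport--Shostak--Pease bound, and the strict inequality $t<(1/3-\epsilon)n$ is precisely what the sampling argument needs to guarantee honest-majority committees with high probability. Thus the slack is both necessary for feasibility and sufficient for the efficiency claims.
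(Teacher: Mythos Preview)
Your reduction is correct and matches the paper's treatment: the paper does not prove this theorem at all but simply cites it from \cite{Braud-Santoni:2013:FBA:2484239.2484243}, preceded only by the one-line remark that ``it is known that a BA protocol can be used to perform secure broadcasts.'' Your proposal is a faithful and more explicit unpacking of exactly that folklore reduction, so there is nothing to compare---you have supplied the argument the paper leaves implicit.
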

\item [{Sorting~Networks.}] A \emph{sorting network} is a network of \emph{comparators}.
Each comparator is a gate with two input wires and two output wires.
When two values enter a comparator, it outputs the lower value on
the top output wire, and the higher value on the bottom output wire.
Ajtai~et al.~\cite{Ajtai:1983:SCL:61981.61982} describe an asymptotically-optimal
$O(\log n)$ depth sorting network. However, this network is not practical
due to large constants hidden in the depth complexity. Leighton and
Plaxton~\cite{Leighton:1990:Sorting} propose a practical \emph{probabilistic
sorting circuit} that sorts with very high probability meaning that
it sorts all but $\epsilon\cdot n!$ of the $n!$ possible input permutations,
where $\epsilon=1/2^{2^{k\sqrt{\log n}}}$, for any constant $k>0$.
For example, for $k=2$ and $n>64$, we get $\epsilon<10^{-9}$. While
this circuit is sufficient for us to prove our main results (Theorem~\ref{thm:main}),
one can instead use the $O(\log^{2}n)$-depth sorting network of Batcher~\cite{Batcher:1968:SNA:1468075.1468121}
for sorting all $n!$ permutations at the expense of $O(\log^{2}n)$
protocol latency.
\item [{Secure~Comparison.}] Given two linearly secret-shared values $a,b\in\zp$,
Nishide and Ohta~\cite{Nishide:PKC:2007} propose an efficient protocol
for computing a sharing of $\rho\in\{0,1\}$ such that $\rho=(a\leq b)$.
Their protocol has $O(1)$ rounds and requires $O(\ell)$ invocations
of a secure multiplication protocol, where $\ell$ is the bit-length
of elements to be compared. We refer to this protocol by \textsf{Compare}.
We also describe a fast multiplication protocol to be used along with
the comparison protocol of \cite{Nishide:PKC:2007} for implementing
fast comparator gates.
\item [{Share~Renewal.}] In our protocol, a shuffling circuit is securely
evaluated. Each gate of the circuit is assigned a quorum $Q$ and
the parties in $Q$ are responsible for comparison of secret-shared
inputs. Then, they send the secret-shared result to any quorums associated
with gates that need this result as input. Let $Q^{\prime}$ be one
such quorum. In order to secret-share the result to $Q^{\prime}$
without revealing any information to any individual party (or to any
coalition of dishonest parties), a fresh sharing of the result must
be distributed in $Q^{\prime}$. To this end, we use the share renewal
protocol of Herzberg~et al.~\cite{Herzberg:1995:proactive}%
\footnote{This technique was first used by Ben-Or~et~al.~\cite{bgw88} and
was later proved to be UC-secure by Asharov and Lindell in \cite{cryptoeprint:2011:136}.%
}. To update a shared value $s$ defined over a degree $d$ polynomial
$\phi(x)$, the protocol generates a degree $d$ random polynomial
$\delta(x)$ such that $\delta(0)=0$. The new polynomial $\phi^{*}(x)$
is then computed from $\phi^{*}(x)=\phi(x)+\delta(x)$. Since $\phi^{*}(0)=\phi(0)+\delta(0)=s$,
the new polynomial $\phi^{*}(x)$ defines a fresh sharing of $s$.
Combined with the VSS scheme of Theorem\ \ref{thm:katz-vss} (or
Theorem\ \ref{thm:evss}), this protocol is secure against an active
adversary (with $t<n/3$), takes constant number of rounds, and requires
each party to send $O(n^{2})$ field elements. We refer to this protocol
by \textsf{RenewShares }throughout this paper.
\end{description}

\section{Our Protocol \label{sec:protocol}}

In this section, we describe our protocol for anonymous broadcast.
Consider $n$ parties $P_{1},P_{2},...,P_{n}$ each having a message
$x_{i}\in\mathbb{Z}_{p}$, for a prime $p$ and all $i\in[n]$\textit{\emph{.
}}The parties want to anonymously send the messages to each other
and receive the results back. We first describe an ideal functionality
of our protocol where a hypothetical trusted party $P$ computes the
desired protocol outcome by communicating with all parties%
\footnote{We are inspired by the standard ideal/real world definition for multi-party
protocols proposed by Canetti~\cite{Canetti:UCSecurity:2001}. %
}. In every run of the protocol, $P$ executes a shuffling protocol
over the messages and sends the shuffled sequence of messages to all
parties. The shuffling protocol first chooses a uniform random number
$r_{i}\in\mathbb{Z}_{p}$ to form an input pair $(r_{i},x_{i})$ for
each party $P_{i}$ and for all $i\in[n]$. The protocol then uses
a shuffling circuit that is based on the sorting network of Leighton
and Plaxton~\cite{Leighton:1990:Sorting} to sort the set of pairs
$\{(r_{1},x_{1}),...,(r_{n},x_{n})\}$ according to their first elements.
We later show that this functionality randomly permutes the set of
inputs $\{x_{1},...,x_{n}\}$.

In our protocol, we denote the shuffling circuit by $C$, which has
$m$ gates. Each gate is essentially a comparator gate and is denoted
by $G_{i}$, for $i\in[m].$ Our protocol first creates $n$ quorums
and then assigns each gate of $C$ to a quorum. For $n$ parties,
the circuit has $\lceil n/2\rceil$ input gates as each comparator
gate has two inputs. We label the quorums assigned to the input gates
by $Q_{1},...,Q_{\lceil n/2\rceil}$ and call them \emph{input quorums}.
$C$ also has $\lceil n/2\rceil$ output gates, which correspond to
\emph{output quorums} labeled by $Q_{1}^{\prime},...,Q_{\lceil n/2\rceil}^{\prime}$.

We now implement the real functionality of our protocol based on the
ideal functionality described above. Protocol~\ref{pro:Main} defines
our main algorithm. Throughout the protocol, we represent each\emph{
}shared value $s\in\zp$ by $\la s\ra=(s_{1},...,s_{n})$ meaning
that each party $P_{i}$ holds a share $s_{i}$ generated by the VSS
scheme during its sharing phase. Using the natural component-wise
addition of representations, we define $\la a\ra+\la b\ra=\la a+b\ra$.
For multiplication, we define $\la a\ra\cdot\la b\ra=\mathsf{Multiply(}\la a\ra,\la b\ra\mathsf{)}$,
where $\mathsf{Multiply}$ is a protocol defined in this section.

For the reader's convenience, we now list all subprotocols that we
use in our protocol. These subprotocols are also defined in this section,
Section~\ref{sec:preliminaries}, and Section~\ref{sec:Remaining-Algorithms}.
\begin{itemize}
\item \textsf{GenRand: }A well-known technique (due to \cite{Beaver:1991})
that generates uniform random secrets by adding shared values chosen
uniformly at random by each party. Defined as Protocol\ \ref{pro:GenRand}
in Section~\ref{sec:Remaining-Algorithms}.
\item \textsf{Compare: }The protocol of Nishide and Ohta\ \cite{Nishide:PKC:2007}
for securely comparing two secret-shared values.
\item \textsf{Multiply: }Multiplies two secret-shared values. Defined as
Protocol~\ref{pro:Multiply} in this section.
\item \textsf{RenewShares: }The share renewal protocol of Herzberg~et al.~\cite{Herzberg:1995:proactive}
for re-randomizing shares of a secret-shared value. This makes past
knowledge of the adversary obsolete after re-sharing the secret.
\item \textsf{Reconst: }Given a shared secret $s$, reconstructs $s$ via
polynomial interpolation and error correction. Defined as Protocol\ \ref{pro:Reconst}
in Section~\ref{sec:Remaining-Algorithms}.
\end{itemize}
{\small{}}
\begin{algorithm}[H]
{\small{\caption{\enskip{}\textsf{AnonymousBroadcast}}
\label{pro:Main}}}{\small \par}

{\small{\medskip{}
}}{\small \par}
\begin{enumerate}
\item \textbf{\small{Setup:}}{\small{ Parties jointly run the quorum building
protocol of Theorem~\ref{thm:quorum-building} with all parties to
agree on $n$ good quorums $Q_{1},...,Q_{n}$, and assign each gate
$G_{i}$ to the $(i\bmod n)$-th quorum, for all $i\in[m]$.}}%
{\small \par}
\item \textbf{\small{Input~Sharing:}}{\small{ Parties $P_{i}$ and $P_{i+1}$
secret share their inputs $x_{i}$ and $x_{i+1}$ among all parties
of $Q_{\lceil i/2\rceil}$ using the VSS scheme of Theorem~\ref{thm:katz-vss},
for all $i\in[n]$. }}{\small \par}
\item \textbf{\small{Random~Generation: }}{\small{Parties in each input
quorum $Q_{\lceil i/2\rceil}$ run $\mathsf{GenRand}$ twice to generate
sharings $\la r_{i}\ra$ and $\la r_{i+1}\ra$ of two random elements
$r_{i},r_{i+1}\in\zq$, for some prime $q$. At the end of this step,
$Q_{\lceil i/2\rceil}$ holds two pairs of sharings $(\la r_{i}\ra,\la x_{i}\ra)$
and $(\la r_{i+1}\ra,\la x_{i+1}\ra)$, for all $i\in[n]$.}}{\small \par}
\item \textbf{\small{Circuit~Computation: }}{\small{The circuit is evaluated
level-by-level starting from the input gates. For each gate $G$ and
the quorum $Q$ associated with it, parties in $Q$ do the following.\medskip{}
}}{\small \par}

\begin{enumerate}
\item \textbf{\small{Gate~Computation: }}{\small{Let $(\la r\ra,\la x\ra)$
and $(\la r^{\prime}\ra,\la x^{\prime}\ra)$ be the sharings associated
with the inputs of $G$. Parties compute}}\textsf{\small{ $\la\rho\ra=\mathsf{Compare(}\la r\ra,\la r^{\prime}\ra\mathsf{)}$}}{\small{,
where $\rho=(r\leq r^{\prime})$. Then, they compute the output pairs
$(\la s\ra,\la y\ra)$ and $(\la s^{\prime}\ra,\la y^{\prime}\ra)$
from\vspace{-8pt}
\begin{gather}
\begin{aligned}\la s\ra & =\la\rho\ra\cdot\la r\ra+(1-\la\rho\ra)\cdot\la r^{\prime}\ra\\
\la y\ra & =\la\rho\ra\cdot\la x\ra+(1-\la\rho\ra)\cdot\la x^{\prime}\ra
\end{aligned}
\qquad\begin{aligned}\la s^{\prime}\ra & =\la\rho\ra\cdot\la r^{\prime}\ra+(1-\la\rho\ra)\cdot\la r\ra\\
\la y^{\prime}\ra & =\la\rho\ra\cdot\la x^{\prime}\ra+(1-\la\rho\ra)\cdot\la x\ra
\end{aligned}
\label{eq:switching}
\end{gather}
\vspace{-10pt}
}}{\small \par}
\item []\setcounter{enumii}{1}{\small{For every addition over two shared
values $\la a\ra$ and $\la b\ra$ performed above, parties computes
$\la c\ra=\la a\ra+\la b\ra$. For every multiplication, they run
$\la c\ra=\mathsf{Multiply(}\la a\ra,\la b\ra\mathsf{)}$.\smallskip{}
}}{\small \par}
\item \textbf{\small{Output~Resharing:}}{\small{ Parties run $\mathsf{RenewShares}$
}}over $\la s\ra$, $\la y\ra$, $\la s^{\prime}\ra$, and $\la y^{\prime}\ra$
{\small{to re-share them in}} {\small{the quorum associated with the
parent gate. }}%
{\small \par}
\end{enumerate}
\item \textbf{\small{Output~Propagation: }}{\small{Let $(\la s_{i}\ra,\la y_{i}\ra)$
and $(\la s_{i+1}\ra,\la y_{i+1}\ra)$ be the pairs of shared values
each output quorum $Q_{i}^{\prime}$ receives. Parties in $Q_{i}^{\prime}$
run }}\textsf{\small{$z_{i}\mathsf{=Reconst(}\la y_{i}\ra\mathsf{)}$}}{\small{
and }}\textsf{\small{$z_{i+1}\mathsf{=Reconst(}\la y_{i+1}\ra\mathsf{)}$
}}{\small{and send $(z_{i}$,$z_{i+1})$ to all $n$ parties. Each
party receiving a set of $N$ pairs from each output quorum, chooses
one pair via majority filtering and considers it as the output of
that quorum.}}\end{enumerate}
\end{algorithm}
{\small \par}

We prove the correctness and secrecy of Protocol~\ref{pro:Main}\textsf{
}(and Theorem~\ref{thm:main}) in Section~\ref{sub:Security-of-Main}.
In Lemma~\ref{lem:Collision}, we show that for sufficiently large
$k>0$ and $p>\frac{3}{2}kn^{2}\log n$, this protocol\textsf{ }computes
a random permutation of the input messages with high probability.
We also prove the following lemmas in Section~\ref{sec:Proof-of-Costs}.\medskip{}

\begin{lem}
\emph{\label{lem:protocol-costs}Using the perfectly-secure VSS of
Theorem~\ref{thm:katz-vss}, Protocol~\ref{pro:Main}} sends $\tilde{O}(n)$
bits, computes $\tilde{O}(n)$ operations, and takes $O(\log n)$
rounds of communication for shuffling $n$ messages.
\end{lem}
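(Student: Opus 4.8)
The plan is to walk through Protocol~\ref{pro:Main} step by step and bound the communication, computation, and round cost of each step, relying on two structural facts. First, every quorum has size $N=\Theta(\log n)$ and, being a \emph{good} quorum, contains at most $T<(1/6-\epsilon)N<N/3$ dishonest members, so each subprotocol \textsf{VSS}, \textsf{GenRand}, \textsf{Multiply}, \textsf{Compare}, \textsf{RenewShares}, \textsf{Reconst}, when run inside a single quorum, is correct and costs $\operatorname{polylog}(n)$ bits, $\operatorname{polylog}(n)$ field operations, and $O(1)$ rounds. Second, the shuffling circuit $C$ built from the Leighton--Plaxton network has depth $O(\log n)$ and $\tilde{O}(n)$ comparator gates, each realized through the switching equations~\eqref{eq:switching} by a constant number of calls to \textsf{Compare}, \textsf{Multiply}, addition, and \textsf{RenewShares}. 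Here ``shuffling $n$ messages'' denotes Steps~2--4 together with the local reconstructions inside the output quorums; the final dissemination of the reconstructed values to all $n$ parties in Step~5 is the separate $\Omega(n^2)$ output phase, and the one-time quorum building of Step~1 is charged separately, being $O(1)$ rounds by Theorem~\ref{thm:quorum-building}.

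First I would nail down fact one. A field element of $\zp$ or $\zq$ is $O(\log n)$ bits, since both primes are $\operatorname{poly}(n)$. The VSS of Theorem~\ref{thm:katz-vss} uses one broadcast and three rounds; inside a quorum a broadcast is $O(1)$ rounds and $\operatorname{poly}(N)$ bits, so a sharing costs $\operatorname{poly}(N)\cdot O(\log n)=\operatorname{polylog}(n)$ bits, $O(1)$ rounds, and $\operatorname{polylog}(n)$ local operations per party. \textsf{GenRand} is $N$ parallel VSS sharings plus local additions, hence the same bounds. \textsf{Multiply} (Protocol~\ref{pro:Multiply}) and \textsf{Reconst} (Protocol~\ref{pro:Reconst}) are $O(1)$-round protocols that exchange $\operatorname{poly}(N)$ field elements, i.e.\ $\operatorname{polylog}(n)$ bits. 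The comparison protocol \textsf{Compare} of Nishide and Ohta runs in $O(1)$ rounds with $O(\ell)$ parallel multiplications, where $\ell=\lceil\log q\rceil=O(\log n)$ is the bit length of the compared values in $\zq$, so it costs $O(\log n)\cdot\operatorname{polylog}(n)=\operatorname{polylog}(n)$ bits and $O(1)$ rounds. Finally \textsf{RenewShares} takes $O(1)$ rounds with each party sending $O(N^2)$ field elements, i.e.\ $O(N^3\log n)=\operatorname{polylog}(n)$ bits per invocation, and stays $O(1)$-round even when the re-sharing source and destination are two distinct good quorums.

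Next I would sum over the protocol. Step~2 performs $n$ \textsf{VSS} sharings (party $P_i$ into $Q_{\lceil i/2\rceil}$), hence $\tilde{O}(n)$ bits and operations and $O(1)$ rounds. Step~3 performs $2\lceil n/2\rceil=O(n)$ \textsf{GenRand} invocations, again $\tilde{O}(n)$ and $O(1)$ rounds. Step~4 evaluates the $m=\tilde{O}(n)$ gates, each consuming a constant number of the subprotocols above and thus $\operatorname{polylog}(n)$ bits and operations, for a total of $\tilde{O}(n)$; since the gates of a common circuit level sit in different quorums and are executed in parallel, and \textsf{Compare}, \textsf{Multiply}, \textsf{RenewShares} are each $O(1)$-round, one level costs $O(1)$ rounds, so the $O(\log n)$ levels cost $O(\log n)$ rounds. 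The reconstructions of Step~5 are $O(n)$ \textsf{Reconst} invocations, one per output quorum, contributing $\tilde{O}(n)$ bits and operations and $O(1)$ rounds. Adding up, Protocol~\ref{pro:Main} sends $\tilde{O}(n)$ bits, performs $\tilde{O}(n)$ operations, and runs in $O(1)+O(1)+O(\log n)+O(1)=O(\log n)$ rounds, as claimed.

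The hard part is not the arithmetic but making the black-box ingredients precise. The three places to be careful are: (a) confirming that the Leighton--Plaxton construction really has depth $O(\log n)$ and $\tilde{O}(n)$ comparators and can be viewed as a layered circuit whose wires connect consecutive levels, and that rewriting a comparator in the arithmetic form~\eqref{eq:switching} multiplies the gate count only by a constant; (b) arguing that \textsf{RenewShares} invoked between two distinct quorums --- and the composition of many such cross-quorum re-sharings along the depth of the circuit --- keeps both the $\operatorname{polylog}(n)$-bit and $O(1)$-round guarantees; this is exactly where the honest-supermajority hypothesis on good quorums is used, and one must check that resharing ``to the quorum of the parent gate'' is well defined so that each output pair is reshared exactly once; and (c) keeping the round bookkeeping honest, namely that an entire circuit level is genuinely $O(1)$ rounds because \textsf{Compare} is itself constant-round. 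A further remark, not needed for the totals proved here but used to obtain the per-party (load-balanced) bounds of Theorem~\ref{thm:main}, is that each party lies in only $O(\log n)$ quorums, so the $\tilde{O}(n)$ totals divide evenly into $\tilde{O}(1)$ per party.
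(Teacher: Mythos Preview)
Your proposal is correct and follows essentially the same approach as the paper: a step-by-step walk through Protocol~\ref{pro:Main}, bounding each phase by the cost of its constituent subprotocols inside quorums of size $N=O(\log n)$, and using the $O(\log n)$ depth and $O(n\log n)$ size of the Leighton--Plaxton network to obtain the totals. The paper's own proof abstracts the VSS cost as functions $\nu_1(N),\nu_2(N)=O(\mathsf{poly}(N))$ and then multiplies out, arriving at the same $\tilde{O}(n)$ and $O(\log n)$-round conclusions; your more explicit bookkeeping of bit sizes, the careful delineation of what ``shuffling'' excludes (the final $\Omega(n^2)$ dissemination), and the remarks on cross-quorum \textsf{RenewShares} and load balancing are refinements rather than a different route.
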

{\small{\medskip{}
}}{\small \par}
\begin{lem}
\emph{\label{lem:protocol-costs-eVSS}Using the cryptographic VSS
of Theorem~\ref{thm:evss}, Protocol~\ref{pro:Main}} sends $O(n\log^{5}n)$
messages of size $O(\kappa+\log n)$ and computes $O(n(\kappa+\log n)\log^{5}n)$
operations for shuffling $n$ messages, where $\kappa$ is the security
parameter. The protocol has $O(\log n)$ rounds of communication.\medskip{}

\end{lem}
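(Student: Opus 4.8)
The plan is to mirror the proof of Lemma~\ref{lem:protocol-costs}, substituting the cryptographic eVSS scheme of Theorem~\ref{thm:evss} for the perfectly-secure VSS of Theorem~\ref{thm:katz-vss}, and then tracking how this substitution changes message \emph{sizes} and the cost per field/group operation, while the \emph{number} of exchanged messages and the round count remain essentially unchanged. First I would fix the size of the shuffling circuit $C$: since $C$ is the Leighton--Plaxton sorting network applied to the $n$ tagged messages, it has depth $O(\log n)$ and $m = O(n\log n)$ comparator gates, and by the quorum assignment in Protocol~\ref{pro:Main} each gate is handled inside a quorum of $N = O(\log n)$ parties, of which (by Theorem~\ref{thm:quorum-building}) fewer than a $(1/6-\epsilon)$-fraction are dishonest --- in particular fewer than $N/2$, so the eVSS scheme of Theorem~\ref{thm:evss} is applicable inside every quorum.

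Next I would bound the work done per comparator gate. Each gate requires one call to \textsf{Compare}, a constant number of \textsf{Multiply} calls and local additions to evaluate the switching relations~(\ref{eq:switching}), and one \textsf{RenewShares} on $O(1)$ sharings. The \textsf{Compare} protocol runs in $O(1)$ rounds with $O(\ell)$ invocations of \textsf{Multiply}, where $\ell$ is the bit-length of the compared tags; by Lemma~\ref{lem:Collision} the tag modulus may be taken of size $\Theta(n^2\log n)$, hence $\ell = O(\log n)$. Thus each gate reduces to $O(\log n)$ quorum-internal invocations of \textsf{Multiply} / \textsf{RenewShares}.

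Then I would cost a single quorum-internal \textsf{Multiply} (and \textsf{RenewShares}) under eVSS. Such an invocation consists of $O(N)$ eVSS sharings together with $O(1)$ secure broadcasts among the $N$ quorum members; a broadcast among $N$ parties costs $O(N^2)$ messages (since $N$ is only logarithmic, a naive reliable-broadcast subroutine suffices and we need not appeal to the amortized bound of Theorem~\ref{thm:broadcast}), and a single eVSS sharing sends $O(N)$ point-to-point messages (shares and witnesses) plus $O(1)$ broadcasts, so one \textsf{Multiply} costs $O(N^3) = O(\log^3 n)$ messages. Crucially, every exchanged message is either a field element (of $O(\log p) = O(\log n)$ bits, using $p = \Theta(n^2\log n)$) or an eVSS commitment/witness, which is a single elliptic-curve group element of $O(\kappa)$ bits; hence each message has size $O(\kappa+\log n)$ and each group/field operation costs $O(\kappa+\log n)$. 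Multiplying through --- $m = O(n\log n)$ gates, $O(\log n)$ sub-invocations per gate, $O(\log^3 n)$ messages per invocation --- gives $O(n\log^5 n)$ messages, and the operation count picks up the extra $(\kappa+\log n)$ factor, yielding $O(n(\kappa+\log n)\log^5 n)$. The setup of Theorem~\ref{thm:quorum-building}, the input-sharing step, the \textsf{GenRand} calls, and the \textsf{Reconst} calls each contribute only $\tilde{O}(n)$ messages and are dominated, and the $O(\log n)$-level circuit evaluated in $O(1)$ rounds per level yields $O(\log n)$ rounds --- the only change from the Theorem~\ref{thm:katz-vss} analysis being eVSS's two broadcasts and four rounds, still constant.

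I expect the main obstacle to be the bookkeeping at the interface between the ``global'' cited results (eVSS and secure broadcast are stated for $n$ parties, and Theorem~\ref{thm:broadcast} gives only an amortized bound) and their use inside logarithmic-size quorums: one must justify that a single broadcast among $N$ parties really costs $O(N^2)$ messages and $O(1)$ rounds, and that running eVSS per quorum is sound given that a quorum only guarantees a $(1/6-\epsilon)$-fraction dishonest. A secondary point requiring care is the bound $\ell = O(\log n)$ --- namely, that the \textsf{Compare} cost is governed by the tag modulus $\Theta(n^2\log n)$ and not by some larger message field, so that all field elements handled by the protocol are simultaneously $O(\log n)$ bits, consistent with the collision bound of Lemma~\ref{lem:Collision}.
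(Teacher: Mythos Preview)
Your proposal is correct and follows essentially the same approach as the paper's proof: decompose into the circuit-computation step dominating the cost, count $O(n\log n)$ gates each requiring $O(\log n)$ invocations of \textsf{Multiply}, and cost each \textsf{Multiply} as $O(N)$ eVSS sharings of cost $O(N^2)$ each, yielding $O(n\log^5 n)$ messages, with the $(\kappa+\log n)$ factor coming from group-element and field-element sizes. Your side remarks on justifying broadcast cost inside $N$-party quorums and on the bit-length of the tag field are in fact more careful than the paper, which simply asserts $\nu_1(N)=N^2$ and $\log q = O(\log n)$ without elaboration.
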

We now describe the subprotocol $\mathsf{Multiply}$ that is based
on a well-known technique proposed by Beaver~\cite{Beaver:1991}.
The technique generates a shared multiplication triple $(\la u\ra,\la v\ra,\la w\ra)$
such that $w=u\cdot v$. The triple is then used to convert multiplications
over shared values into additions. Since the triple generation step
is independent of the inputs, it can be done in a preprocessing phase.
Moreover, since each triple acts like a one-time pad, a fresh triple
must be used for each multiplication.

{\small{}}
\begin{algorithm}[H]
{\small{\caption{\enskip{}\textsf{Multiply}}
\label{pro:Multiply}}}{\small \par}

\textit{\small{Usage.}}{\small{ Initially, parties hold two secret-shared
values $\la a\ra=(a_{1},...,a_{N})$ and $\la b\ra=(b_{1},...,b_{N})$
that are on a polynomial of degree $N/3$. The protocol computes a
shared value $\la c\ra=(c_{1},...,c_{N})$ such that $c=a\cdot b$.}}{\small \par}

{\small{\medskip{}
}}{\small \par}

{\small{$\mathsf{Multiply(}\la a\ra,\la b\ra\mathsf{):}$\medskip{}
}}{\small \par}

{\small{For all $i\in[N]$, party $P_{i}$ performs the following
steps.}}{\small \par}
\begin{enumerate}
\item {\small{$P_{i}$ runs }}\textsf{\small{GenRand}}{\small{ twice with
other parties to generate two shared random values $u_{i}$ and $v_{i}$
both on polynomials of degree $N/6$. Then, he computes $w_{i}=u_{i}\cdot v_{i}$ }}{\small \par}
\item {\small{$P_{i}$ updates $u_{i}$, $v_{i}$, and $w_{i}$ by running
$\mathsf{RenewShares}$. }}{\small \par}
\item {\small{$P_{i}$ computes $\varepsilon_{i}=a_{i}+u_{i}$ and $\delta_{i}=b_{i}+v_{i}$
and runs $\mathsf{\varepsilon=Reconst(}\varepsilon_{i}\mathsf{)}$
and $\delta=\mathsf{Reconst(}\delta_{i}\mathsf{)}$. Then, he computes
$c_{i}=w_{i}+\delta a_{i}+\varepsilon b_{i}-\varepsilon\delta$.}}\end{enumerate}
\end{algorithm}
{\small \par}

Clearly, $\varepsilon$ and $\delta$ can be safely revealed to all
parties so that each party can compute $\varepsilon\delta$ locally.
The correctness and the secrecy of the algorithm\textsf{ }are proved
by Beaver~\cite{Beaver:1991}. The only difference between Beaver's
original method and our protocol is as follows. Following \cite{bgw88},
he proposes to generate shared random elements $u$ and $v$ on a
polynomial of degree $N/3$ and multiply them to get a polynomial
of degree $2N/3$ for $w$. Then, a degree reduction algorithm is
run to reduce the degree from $2N/3$ to $N/3$. Instead of this,
we choose polynomials of degree $N/6$ for $u$ and $v$ to get a
polynomial of degree $N/3$ for $w$. In our protocol, since up to
$N/6$ of parties are dishonest is each quorum, we can do this without
revealing any information to the adversary. We prove this formally
in Lemma \ref{Lem:eVSSProps} and Lemma \ref{lem:Multiply-Secrecy}. 

One issue with multiplying shares, as noticed by Ben-Or~et~al.~\cite{bgw88},
is that the resulting polynomial is not completely random because,
for example, the product of two polynomials cannot be irreducible.
To solve this, we simply re-randomize the shares via \textsf{RenewShare}.
Moreover, since $\la u\ra$ and $\la v\ra$ are on degree $N/6$ polynomials
and \textsf{Multiply }uses them to mask-and-reveal the inputs (that
are on degree $N/3$ polynomials) in the last step, it is necessary
to randomize $\la u\ra$ and $\la v\ra$ via \textsf{RenewShares}
before masking. This puts the shared values $\la u\ra$ and $\la v\ra$
on new random polynomials with degree $N/3$.%
\medskip{}

\begin{defn}
\textbf{{[}Perfect Random Permutation{]}} Consider a set of $n\geq1$
elements $A=\{a_{1},a_{2},...a_{n}\}$. A perfect random permutation
of $A$ is a permutation chosen uniformly at random from the set of
all possible $n!$ permutations of $A$.\medskip{}
\end{defn}
\begin{lem}
\label{lem:permutation} Consider a sequence of input pairs $(r_{1},x_{1}),...,(r_{n},x_{n})$,
and a sorting protocol $\Pi$\textsf{ }that sorts the pairs according
to their first elements. $\Pi$ computes a perfect random permutation
of the pairs if their first elements are chosen uniformly at random
and are distinct.\end{lem}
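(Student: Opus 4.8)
The plan is to show that the map from tuples of distinct first-coordinates to permutations is "balanced" in the right sense: conditioned on the event that $r_1,\dots,r_n$ are distinct, the induced permutation is uniform over $S_n$. First I would fix notation. Since $\Pi$ is a sorting protocol, given an input sequence $(r_1,x_1),\dots,(r_n,x_n)$ with distinct $r_i$, its output is the unique sequence $(r_{\sigma(1)},x_{\sigma(1)}),\dots,(r_{\sigma(n)},x_{\sigma(n)})$ where $\sigma\in S_n$ is the unique permutation with $r_{\sigma(1)}<r_{\sigma(2)}<\dots<r_{\sigma(n)}$. In other words, $\sigma$ is the permutation that ranks the values $r_1,\dots,r_n$; I will call it $\mathrm{rank}(r_1,\dots,r_n)$. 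The claim is that when $r_1,\dots,r_n$ are drawn i.i.d. uniformly (over whatever range, say $\zp$) and then conditioned on being pairwise distinct, $\sigma$ is uniformly distributed on $S_n$, and hence the output pair sequence is a perfect random permutation of the input pairs in the sense of the Definition.

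The key step is a symmetry argument. Let $D$ denote the event that $r_1,\dots,r_n$ are pairwise distinct. For any target permutation $\tau\in S_n$, I want $\Pr[\sigma=\tau \mid D]$ to be independent of $\tau$. Because the $r_i$ are i.i.d., the joint distribution of $(r_1,\dots,r_n)$ is exchangeable: for any $\pi\in S_n$, $(r_{\pi(1)},\dots,r_{\pi(n)})$ has the same distribution as $(r_1,\dots,r_n)$, and the event $D$ is invariant under permuting coordinates. Now observe the equivariance $\mathrm{rank}(r_{\pi(1)},\dots,r_{\pi(n)}) = \pi^{-1}\circ \mathrm{rank}(r_1,\dots,r_n)$ (applying a permutation to the input coordinates composes with the rank permutation on the appropriate side — I would pin down the exact side by checking on a small example, but either side suffices for the argument). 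Combining exchangeability with this equivariance: for any $\tau$, $\Pr[\sigma = \tau \mid D] = \Pr[\pi^{-1}\circ\sigma = \tau \mid D] = \Pr[\sigma = \pi\circ\tau \mid D]$ for every $\pi$. Since $\pi\circ\tau$ ranges over all of $S_n$ as $\pi$ does, all these probabilities are equal, and being a probability distribution over the $n!$ elements of $S_n$ (note $\sigma$ is well-defined and lies in $S_n$ precisely on $D$), each equals $1/n!$.

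Finally I would connect this to the statement's phrasing about permutations of the pair sequence: the output of $\Pi$ is $(r_{\sigma(1)},x_{\sigma(1)}),\dots$, i.e., the pair sequence permuted by $\sigma$, and since $\sigma$ is uniform on $S_n$ this is by definition a perfect random permutation of $(r_1,x_1),\dots,(r_n,x_n)$. I should also note the hypothesis is exactly "first elements chosen uniformly at random and distinct," so no conditioning subtlety remains — we are simply told to work on $D$. One subtlety worth a sentence: the $x_i$ play no role and need not be distinct or random; the permutation is determined by the $r_i$ alone, which is why carrying the $x_i$ as payload preserves the correspondence. The main obstacle, such as it is, is stating the equivariance of $\mathrm{rank}$ cleanly and getting the composition side right; everything else is a one-line exchangeability argument. (The separate question of how likely $D$ is — i.e., choosing the range large enough to avoid collisions whp — is deferred to Lemma~\ref{lem:Collision} and is not part of this lemma.)
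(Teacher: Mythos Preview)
Your proposal is correct, but takes a genuinely different route from the paper. The paper argues sequentially via the chain rule: fixing an arbitrary target output sequence $Y'$, it observes that the $i$-th output equals the target iff the corresponding $r$-value is the $i$-th smallest among $\{r_1,\dots,r_n\}$ given the first $i-1$ smallest, which happens with probability $1/(n-i+1)$; multiplying gives $\tfrac{1}{n}\cdot\tfrac{1}{n-1}\cdots 1 = 1/n!$. You instead use exchangeability of the i.i.d.\ tuple $(r_1,\dots,r_n)$ together with the equivariance of the rank map to conclude directly that all $\sigma\in S_n$ are equiprobable. Your symmetry argument is cleaner and in fact more general (it works for any exchangeable distribution on distinct values, not just i.i.d.\ uniform), and it sidesteps the slightly informal conditioning in the paper's step-by-step computation. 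The paper's approach, on the other hand, is more concrete and makes the $1/n!$ value appear explicitly as a telescoping product, which some readers may find more intuitive. Both are valid; your caveat about pinning down the composition side of the equivariance is the only loose end, and you have handled it correctly ($\mathrm{rank}(r_{\pi(1)},\dots,r_{\pi(n)}) = \pi^{-1}\circ\mathrm{rank}(r_1,\dots,r_n)$).
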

\begin{proof}
Let $X=(r_{1},x_{1}),...,(r_{n},x_{n})$ be the input sequence and
$Y=(s_{1},y_{1}),...,(s_{n},y_{n})$ be the output sequence of $\Pi$.
Note $s_{1},...,s_{n}$is the sorted sequence of $\{r_{1},...,r_{n}\}$.
An arbitrary output sequence of pairs $Y^{\prime}=(s_{1}^{\prime},y_{1}^{\prime}),...,(s_{n}^{\prime},y_{n}^{\prime})$
is said to be equal to $Y$ if $y_{i}=y{}_{i}^{\prime}$, for all
$i\in[n]$. We want to prove that the probability of $Y^{\prime}$
being equal to $Y$ is $\frac{1}{n!}$. In general, for any $i\in[n]$,
$y_{i}=y{}_{i}^{\prime}$ if and only if $s{}_{i}^{\prime}$ is the
$i$-th smallest element in $\{r_{1},...,r_{n}\}$ conditioned on
knowing the $i-1$ smallest elements, which happens with probability
$\frac{1}{n-i+1}$. Thus, the probability that $Y=Y^{\prime}$ is
$\frac{1}{n}\cdot\frac{1}{n-1}\cdot...\cdot\frac{1}{2}\cdot1=\frac{1}{n!}.$
\end{proof}
In the random generation step of Protocol~\ref{pro:Main}, it is
possible that two or more input quorums choose the same random elements
from $\zq$. In this situation, we say a \emph{collision} happens.
Collisions reduce the level of anonymity our protocol guarantees because
the higher the probability of collisions, the higher the chance the
adversary is given in guessing the correct sequence of inputs. On
the other hand, we observe that if the field size (i.e., $p$) is
sufficiently large, then the probability of collisions becomes overwhelmingly
small. Lemma~\ref{lem:Collision} gives a lower bound on $p$ such
that collisions are guaranteed to happen with negligible probability.\medskip{}

\begin{lem}
\label{lem:Collision} Let $\zq$ be the field of random elements
generated in the random generation step of Protocol~\ref{pro:Main}.
The probability there is a collision between any two parties is negligible
if $p>\frac{3}{2}kn^{2}\log n$, for some $k>0$. \end{lem}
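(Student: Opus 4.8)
The plan is a first-moment (birthday-bound) argument whose only non-routine ingredient is a preliminary distributional claim: the tags $r_1,\dots,r_n$ produced in the random-generation step of Protocol~\ref{pro:Main} are mutually independent, each uniformly distributed on $\zp$. To see this, recall that by the setup step and Theorem~\ref{thm:quorum-building} every input quorum is a good quorum, hence has an honest majority, and that each $r_i$ is the output of \textsf{GenRand}, i.e.\ the sum in $\zp$ of one summand secret-shared by each party of the quorum, where every honest party's summand is chosen uniformly at random. Since at least one honest summand is uniform and independent of all the others, including the adversarial ones (even for a rushing adversary that fixes its shares after seeing the honest shares), the sum $r_i$ is uniform on $\zp$; and because the $n$ invocations of \textsf{GenRand} --- two inside each input quorum --- draw fresh, independent local randomness, the joint law of $(r_1,\dots,r_n)$ is the product law. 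Here I condition on all quorums being good, which the protocol guarantees and which is the only point at which the ``with high probability'' of the setup phase enters; it is orthogonal to the bound on $p$.

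Granting the claim, fix two parties $P_i$ and $P_j$ with $i\neq j$. As $r_i$ and $r_j$ are independent and each uniform on the $p$-element set $\zp$, we have $\Pr[r_i=r_j]=1/p$. The event that \emph{some} two parties collide is the union of these $\binom{n}{2}$ events, so by a union bound (equivalently, Markov's inequality applied to the number of colliding pairs),
\[
  \Pr[\exists\, i\neq j:\ r_i=r_j]\ \le\ \binom{n}{2}\cdot\frac{1}{p}\ <\ \frac{n^{2}}{2p}.
\]
Substituting the hypothesis $p>\tfrac{3}{2}kn^{2}\log n$ gives $\Pr[\text{collision}]<1/(3k\log n)$, which tends to $0$ as $n\to\infty$; this is the asserted negligible bound, and together with Lemma~\ref{lem:permutation} it shows that the protocol's output is within statistical distance $o(1)$ of a perfect random permutation of the inputs $\{x_1,\dots,x_n\}$.

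I expect the distributional claim to be the only real obstacle: one must argue carefully that \textsf{GenRand} stays uniform against an adversary that picks its shares after seeing the honest parties' shares, and that the $n$ tags are jointly independent (pairwise independence would in fact already suffice for the union bound, but the joint statement is the natural one and is what later proofs use). Everything after that is the textbook birthday computation and needs no estimate beyond $\binom{n}{2}<n^{2}/2$. If one prefers a polynomially small failure probability, say $1/n^{k}$, in place of $O(1/\log n)$, the identical argument delivers it once $p$ is taken a polynomial factor larger --- e.g.\ $p=\Omega(n^{k+2})$ --- with no change to the proof structure.
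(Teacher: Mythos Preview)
Your argument is correct, and it is actually the sounder of the two. Both proofs open the same way---good quorums (Theorem~\ref{thm:quorum-building}) plus the correctness of \textsf{GenRand} make $r_1,\dots,r_n$ independent and uniform on $\zp$---and then diverge. You apply a plain union bound (equivalently Markov on the number of colliding pairs) and obtain $\Pr[\text{collision}]<\binom{n}{2}/p<1/(3k\log n)$. The paper instead sets up the same sum $Y=\sum_{i<j}X_{ij}$ with $E[Y]=\binom{n}{2}/p$ but then invokes a multiplicative Chernoff bound $\Pr[Y\ge(1+\alpha)E[Y]]\le e^{-\alpha^{2}E[Y]/3}$ to claim $\Pr[Y\ge 1]\le n^{-k}$, which is strictly stronger than your conclusion and is what the paper's introduction advertises.

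That stronger conclusion, however, is not actually available under the stated hypothesis. The indicators $X_{ij}$ are not mutually independent (for instance $X_{12}=X_{13}=1$ forces $X_{23}=1$), so Chernoff does not apply as written; and even granting independence, the form $e^{-\alpha^{2}E[Y]/3}$ is the $0<\alpha\le 1$ version, whereas here $\alpha\approx 3k\log n$. Your bound is in fact tight: since $\Pr[\text{all }r_i\text{ distinct}]=\prod_{i=1}^{n-1}(1-i/p)\le e^{-\binom{n}{2}/p}$, the collision probability is $\Theta(1/\log n)$ for $p=\Theta(kn^{2}\log n)$, and no argument can reach $n^{-k}$. Your closing remark is therefore exactly the right diagnosis: a $1/n^{k}$ guarantee requires enlarging $p$ to $\Omega(n^{k+2})$, after which the same first-moment bound already suffices. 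The only residual issue is terminological---$1/(3k\log n)$ is $o(1)$ but not ``negligible'' in the usual cryptographic sense---and you were right to flag that mismatch.
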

\begin{proof}
Based on Theorem~\ref{thm:quorum-building}, all input quorums are
good. Based on the correctness of \textsf{GenRand}, all elements generated
by the input quorums in the random generation step of Protocol~\ref{pro:Main}\textsf{
}are chosen uniformly at random and independent of all other random
elements generated throughout the protocol. Let $P_{i}$ and $P_{j}$
be two parties and $r_{i}$ and $r_{j}$ be the random values assigned
to them respectively by their corresponding input quorums. The probability
that $r_{i}=r_{j}$ is $1/p$. Let $X_{ij}$ be the following indicator
random variable and $Y$ be a random variable giving the number of
collisions between any two parties,{\small{
\[
X_{ij}=\begin{cases}
1, & r_{i}=r_{j}\\
0, & otherwise
\end{cases},\quad Y=\sum_{i,j\in[n]}X_{ij}.
\]
}}Using the linearity of expectations,{\small{
\[
E(Y)=E\big(\sum_{i,j\in[n]}X_{ij}\big)=\sum_{i,j\in[n]}E(X_{ij})=\frac{1}{p}\left(_{2}^{n}\right)=\frac{n(n-1)}{2p}.
\]
}}We want to find an upper bound on the probability of collisions
using the Chernoff bound defined by{\small{
\[
Pr(Y\geq(1+\alpha)E(Y))\leq e^{-\frac{\alpha^{2}E(Y)}{3}}.
\]
}}To ensure that no collision happens with high probability, we need
to have $(1+\alpha)E(Y)<1$ while $e^{-\frac{\alpha^{2}E(Y)}{3}}<\frac{1}{n^{k}}$,
for any $k>0$. Choosing $\alpha<\frac{1}{E(Y)}-1$ and solving the
inequalities for $E(Y)$ we get\medskip{}

\noindent $e^{-\frac{\alpha^{2}E(Y)}{3}}<\frac{1}{n^{k}}\quad\Rightarrow\quad e^{-\frac{\alpha^{2}E(Y)}{3}}<e^{-k\log n}\quad\Rightarrow\quad1-\frac{\alpha^{2}E(Y)}{3}<-k\log n\quad\Rightarrow\quad\frac{(\alpha+1)^{2}E(Y)}{3}>k\log n\quad\Rightarrow$

\smallskip{}

\noindent $\frac{1}{3E(Y)}>k\log n\quad\Rightarrow\quad E(Y)<\frac{1}{3k\log n}$

\noindent \vspace{0pt}

\noindent Since $E(Y)=\frac{n(n-1)}{2p}<\frac{1}{3k\log n}$ , solving
this for $p$ gives the bound $p>\frac{3}{2}kn^{2}\log n$ and $\alpha<3k\log n-1$.\vspace{0cm}

\end{proof}

\section{Simulation Results \label{sec:simulations}}

To study the feasibility of our scheme and compare it to previous
work, we implemented an experimental simulation of our protocol and
three other protocols which can be used for shuffling $n$ inputs
randomly (with traffic-analysis resistance) in the same setting. These
protocols are due to Dani~et~al.~\cite{DKMS-ICDCN-2014}, Boyle~et~al.~\cite{Boyle:2013:CLS:2450206.2450227},
and Zamani~et~al.~\cite{Zamani:2013:FOCI}. To the best of our
knowledge, these protocols have the best scalability with respect
to the network size among other works described in the literature
for the same setting. Since the protocols of \cite{DKMS-ICDCN-2014}
and \cite{Boyle:2013:CLS:2450206.2450227} are general MPC algorithms,
we use them for running our shuffling technique described in Section~\ref{sec:protocol}.
The protocol of \cite{Zamani:2013:FOCI} is a jamming-resistant version
of DC-Nets that scales better than other DC-Net protocols~\cite{chaum88,Juels04,vonAhn03,Waidner:1990:DCD:111563.111630}.
We stress that we are interested in evaluating our protocols for large
network sizes and hence, our choice of protocols for this section
is based on their scalability for large $n$'s. One may find other
protocols in the literature that perform better than the protocols
of our choice for small $n$'s.

We run our protocol for inputs chosen from $\mathbb{Z}_{p}$ with
a 160-bit prime $p$ for getting about 80 bits of security. We set
the parameters of our protocol in such a way that we ensure the probability
of error for the quorum building algorithm of \cite{Braud-Santoni:2013:FBA:2484239.2484243}
is smaller than $10^{-5}$. For the sorting circuit, we set $k=2$
to get $\epsilon<10^{-8}$ for all values of $n$ in the experiment.
Clearly, for larger values of $n$, the error becomes superpolynomially
smaller, e.g., for $n=2^{25}$, we get $\epsilon<10^{-300}$. For
all protocols evaluated in this section, we assume cheating (by malicious
parties) happens in every round of the protocols. This is essential
for evaluating various strategies used by these protocols for tolerating
active attacks. 

Figure~\ref{fig:plots} illustrates the simulation results obtained
for various network sizes between $2^{5}$ and $2^{30}$ (i.e., between
32 and about 1 billion). To better compare the protocols, the vertical
and horizontal axis of the plot are scaled logarithmically. The $x$-axis
presents the number of parties and the $y$-axis presents the number
of Kilobytes sent by each party for delivering one anonymous bit.
In this figure, we report results from three different versions of
our protocols. The first plot (marked with circles) belongs to our
unconditionally-secure protocol that uses the perfectly-secure VSS
scheme of Katz~et~al.~\cite{Katz:2008:VSS}. The second plot (marked
with stars) represents our computationally-secure protocol which uses
the cryptographic VSS of Kate~et~al.~\cite{asiacrypt-2010-23846}.
The last plot (marked with diamonds) shows the cost of the cryptographic
protocol with amortized (averaged) setup cost. To obtain the amortized
plot, we run the setup phase of Protocol~\ref{pro:Main} once and
then used the setup data to run our online protocol 100 times. The
total number of bits sent was then divided by 100 to get the average
communication cost. To achieve better results, we also generated a
sufficient number of random triples in the setup phase. Then, the
triples were used by our multiplication subprotocol in the online
phase to multiply secret-shared values efficiently.

We observe that our protocols (even the unconditional version) perform
significantly better than the other protocols. For example, for $n=2^{20}$
(about 1 million parties%
\footnote{This is less than 1\% of the number of active Twitter users. An intriguing
application of our protocol is an anonymous version of Twitter. %
}), the amortized protocol requires each party to send about 64KB of
data per anonymous bit delivered (about 8MB for our crypto version
and about 64MB for our unconditional version) while the protocols
of \cite{Boyle:2013:CLS:2450206.2450227}, \cite{DKMS-ICDCN-2014},
and \cite{Zamani:2013:FOCI} each send more than one Terabytes of
data per party and per anonymous bit delivered.

\begin{figure}[t]
\begin{centering}
~~~~~~~~~\includegraphics[scale=0.5]{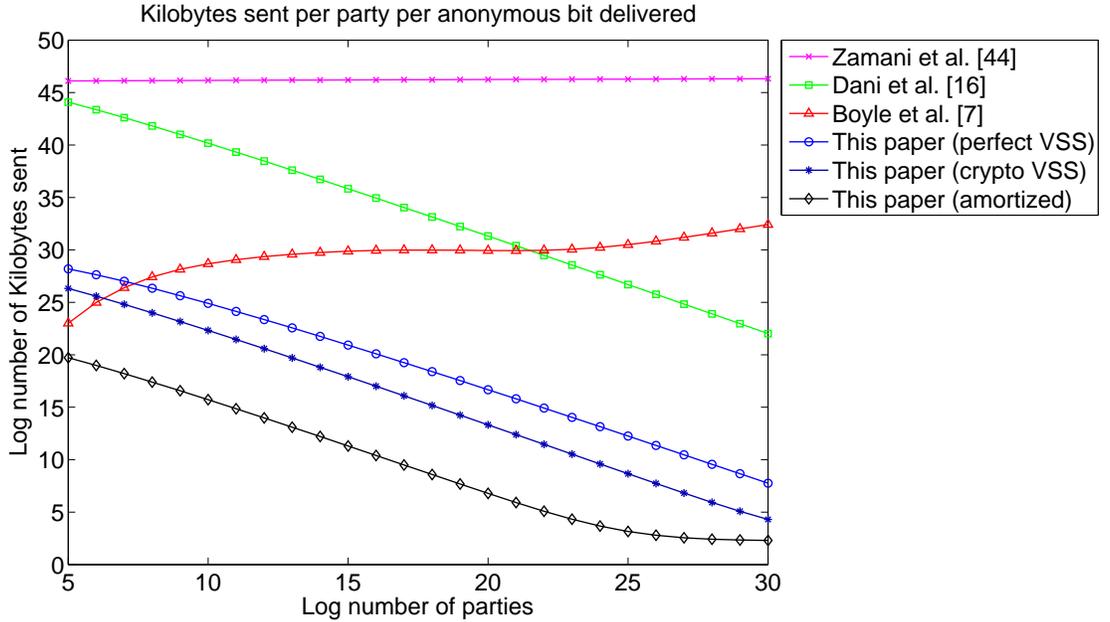}
\par\end{centering}

\caption{Communication cost of anonymous broadcast protocols.}

\label{fig:plots} 
\end{figure}

\section{Conclusion and Open Problems \label{sec:conclusion}}

We described an anonymous broadcast protocol that is fully decentralized
and tolerates up to $n/6$ active faults. Moreover, our protocol is
load-balanced and can tolerate traffic-analysis attacks. The amount
of information sent and the amount of computations performed by each
party scales polylogarithmically with the number of parties. The scalability
is achieved by performing local communications and computations in
groups of logarithmic size and by relaxing the latency requirements. 

Several open problems remain. First, can we decrease the number of
rounds of our protocol using a smaller-depth sorting circuit? For
example, since our protocol sorts uniform random numbers, it seems
possible to use a smaller depth non-comparison-based sorting circuit
like bucket sort. Second, can we improve performance even further
by detecting and blacklisting parties that exhibit adversarial behavior?
Finally, can we adopt our results to the asynchronous model of communication?
We believe that this is possible for a suitably chosen upper bound
on the fraction of faulty parties.

{\small{\bibliographystyle{abbrv}
\bibliography{C:/Users/Mahdi/Desktop/Dropbox/Svn_Research/Writeups/Common/security}
}}{\small \par}

\appendix

\section{Correctness and Security Proofs \label{sec:Security-Proofs}}

In this section, we prove the correctness and secrecy of our protocols.

\subsection{Proof of \textsf{Multiply}}

We have already showed the correctness of the \textsf{Multiply} and
now prove the secrecy of the algorithm. We first define $t$-secrecy,
a property required for the proof of secrecy.
\begin{defn}
\textbf{{[}$t$-secrecy{]}} A secret-shared value defined over a polynomial
$\phi$ is said to have \emph{$t$-secrecy} if and only if
\begin{enumerate}
\item it is \emph{$t$-private} meaning that no set of at most $t$ parties
can compute $\phi(0)$, and 
\item it is \emph{$t$-resilient} meaning that no set of $t$ or less parties
can prevent the other $n-t$ remaining parties from correctly reconstructing
$\phi(0)$.
\end{enumerate}
\end{defn}
\begin{lem}
\label{Lem:eVSSProps} The Shamir's secret sharing scheme has the
following properties.
\begin{enumerate}
\item A sharing defined over a polynomial of degree $d$ has $d$-secrecy
if the adversary has less than $d$ of the shares.
\item Let $\phi_{1}$ and $\phi_{2}$ be independent random polynomials
of degree $d_{1}$ and $d_{2}$ that correspond to sharings with $d_{1}$-secrecy
and $d_{2}$-secrecy respectively. $\phi_{3}=\phi_{1}+\phi_{2}$ is
a polynomial of degree $d_{3}=\max(d_{1},d_{2})$ that corresponds
to a sharing with $d_{3}$-secrecy.
\item Let $\phi_{1}$ and $\phi_{2}$ be polynomials of degree $d$ that
correspond to two sharings both with $d$-secrecy. $\phi_{3}=\phi_{1}\cdot\phi_{2}$
is a polynomial of degree $2d$ that corresponds to a sharing with
$d$-secrecy. %

\end{enumerate}
\end{lem}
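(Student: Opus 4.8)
The plan is to deduce all three parts from one elementary fact about Shamir sharings, Lagrange interpolation: a polynomial of degree at most $d$ is determined by any $d+1$ of its evaluations at distinct nonzero points, and dually, if $\phi$ is drawn uniformly among polynomials of degree $\le d$ with $\phi(0)=s$, then for \emph{every} candidate value $s'$ there is a unique polynomial of degree $\le d$ matching any fixed set of at most $d$ evaluations and having constant term $s'$; hence those evaluations are distributed independently of $s$ (reading ``no set of $\le t$ parties can compute $\phi(0)$'' in the information-theoretic sense, consistent with the perfect security of Theorem~\ref{thm:katz-vss}). The $t$-resilience half of each part is routine and I will dispatch it once: whenever $n\ge(\deg)+1+2t$, which holds throughout our protocol, the honest parties both outnumber $\deg$ and leave enough slack for Reed--Solomon error correction, so \textsf{Reconst} recovers the constant term despite spurious shares; and since each honest party can locally form its share of $\phi_1+\phi_2$ and of $\phi_1\phi_2$ from its shares of $\phi_1,\phi_2$, this covers all three items. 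What remains is $t$-privacy.

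Part~1 is immediate from the interpolation fact: an adversary holding fewer than $d$ (indeed at most $d$) shares of a degree-$d$ sharing sees evaluations that are independent of $\phi(0)$, so the sharing is $d$-private. For Part~2, first note $\deg(\phi_1+\phi_2)\le\max(d_1,d_2)=:d_3$, with equality unless the leading coefficients cancel (a non-event for the independent uniform polynomials arising in our protocol; otherwise one just works with ``degree $\le d_3$'', which suffices downstream). For privacy, let the adversary hold at most $d_3$ shares of $\phi_3$ and assume w.l.o.g.\ $d_1=d_3\ge d_2$; additionally reveal all of $\phi_2$ to it, which can only help. Each held share of $\phi_3$ now reveals the matching share of $\phi_1$ up to a known shift, so by $d_1$-privacy of $\phi_1$ the value $\phi_1(0)$ is still uniform given this enlarged view, hence so is $\phi_3(0)=\phi_1(0)+\phi_2(0)$; and ``uniform given more information'' implies ``uniform given less'', so $\phi_3(0)$ is already uniform given the original $\le d_3$ shares. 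Thus $\phi_3$ has $d_3$-secrecy.

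Part~3 is where the care is needed, and is the main obstacle. The degree claim is easy: $\deg(\phi_1\phi_2)=d_1+d_2=2d$, since the product of the two nonzero leading coefficients is nonzero in the field. But privacy must be argued at the level of evaluations, not coefficients, because $\phi_1\phi_2$ is reducible and hence far from uniform among degree-$2d$ polynomials — this is exactly why the lemma claims only $d$-secrecy, not $2d$-secrecy. The argument I would give: the adversary's view is a function of the at most $d$ products $\phi_1(i)\phi_2(i)$ it holds, hence a function of the pairs $\bigl(\phi_1(i),\phi_2(i)\bigr)_{i\in S}$; by $d$-privacy applied to $\phi_1$ and, separately, to $\phi_2$, the conditional law of $(\phi_1(i))_{i\in S}$ given $\phi_1(0)$ does not depend on the value of $\phi_1(0)$, and symmetrically for $\phi_2$, while $\phi_1\perp\phi_2$; so the joint law of these pairs — and therefore the view — does not depend on $(\phi_1(0),\phi_2(0))$, in particular not on $\phi_3(0)=\phi_1(0)\phi_2(0)$, which is thus uncomputable from the view. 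With the resilience remark above (enough honest parties remain to interpolate the degree-$2d$ polynomial), this gives $\phi_3$ the claimed $d$-secrecy. The one step to pin down rigorously is this evaluations-level independence claim, together with the observation that the slick coefficient argument of Part~2 — ``$\phi_3$ is uniform among polynomials of its degree with the prescribed constant term'' — genuinely fails here, so $d$-secrecy is the best one can conclude by this route.
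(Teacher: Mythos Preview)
Your argument is correct and, overall, follows the same high-level route as the paper: Part~1 from the interpolation/definition fact, Part~2 from linearity, and Part~3 from the ``local computation'' observation that the adversary's share of $\phi_1\phi_2$ is a function of its shares of $\phi_1$ and $\phi_2$ and hence adds no information. Your write-up is considerably more careful than the paper's, which is terse: the paper dispatches Part~1 as ``follows from the definition,'' handles Part~3 with a one-line locality remark essentially identical to yours, and does not separately address the $t$-resilience half at all, whereas you do so explicitly via Reed--Solomon decoding. The one place your arguments diverge in form is Part~2: the paper argues by contradiction (if $\phi_3$ lacked $d_2$-secrecy then, via $\phi_2=\phi_3-\phi_1$, so would $\phi_2$) and defers the full argument to Asharov--Lindell, while you argue directly by revealing all of $\phi_2$ and invoking the $d_1$-privacy of $\phi_1$ on the resulting shifted shares. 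Both are standard and correct; yours has the advantage of being self-contained and of cleanly handling the ``more information implies less'' monotonicity, while the paper's subtraction reduction is quicker to state but, as written, leaves the reader to fill in why knowing $\phi_3(0)$ together with shares of $\phi_1$ would actually break $\phi_2$.
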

\begin{proof}
The first property follows from the definition. The second property
is correct due to the linearity Shamir's scheme. Without loss of generality,
let $d_{1}\leq d_{2}$. Intuitively, if we assume that the sharing
defined by $\phi_{3}$ does not have $d_{2}$-secrecy. Then, parties
compute $\phi_{2}=\phi_{3}-\phi_{1}$. Thus, they can find $\phi_{2}(0)$
(or similarly prevent others from learning $\phi_{2}(0)$). This contradicts
with the fact that $\phi_{2}$ corresponds to a sharing with $d_{2}$-secrecy.
For a complete proof, we refer the reader to Claim 3.4 of \cite{cryptoeprint:2011:136}.
The third property is correct because considering an arbitrary $P_{i}$
holding two shares $a_{i}$ and $b_{i}$ on polynomials $\phi_{1}$
and $\phi_{2}$ respectively, $P_{i}$ learns nothing from $a_{i}\cdot b_{i}$
other than what is revealed from $a_{i}$ and $b_{i}$ since $P_{i}$
computes it locally. So, the resulting shared value also has $d$-secrecy.\end{proof}
\begin{lem}
\label{lem:Multiply-Secrecy} Let $\la a\ra$ and $\la b\ra$ be two
secret-shared values both with $N/3$-secrecy and \textsf{\emph{$\la c\ra=\mathsf{Multiply(}\la a\ra,\la b\ra\mathsf{)}$}}.
The sharing $\la c\ra$ has $N/3$-secrecy.\end{lem}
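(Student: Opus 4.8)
The plan is to walk through the execution of \textsf{Multiply} step by step, tracking the degree and the secrecy level of every secret-shared value with Lemma~\ref{Lem:eVSSProps}, and invoking the security of the sub-protocols \textsf{GenRand}, \textsf{RenewShares}, and \textsf{Reconst} at the points where their outputs feed into the computation. Recall that in each quorum the adversary controls only $T=(1/6-\epsilon)N<N/6<N/3$ parties, so "having less than $N/3$ shares'' always holds for the adversary; this is what lets us convert degree bounds into $N/3$-secrecy via part~1 of Lemma~\ref{Lem:eVSSProps}.

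First I would pin down the triple. By the correctness of \textsf{GenRand} and part~1 of Lemma~\ref{Lem:eVSSProps}, the sharings $\la u\ra$ and $\la v\ra$ from Step~1 lie on independent uniformly random polynomials of degree $N/6$ and have $N/6$-secrecy; since each $P_i$ computes $w_i=u_i v_i$ locally, $\la w\ra$ lies on the product polynomial, which by part~3 has degree $N/3$ and still has $N/6$-secrecy. In Step~2, \textsf{RenewShares} adds to each of $\la u\ra,\la v\ra,\la w\ra$ a fresh, independent, uniformly random polynomial of degree $N/3$ with zero constant term; by part~2 of Lemma~\ref{Lem:eVSSProps} this puts all three on fresh uniformly random degree-$N/3$ polynomials with $N/3$-secrecy, while the security of \textsf{RenewShares} ensures the adversary's view is unchanged by the re-randomization. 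Because $\la u\ra,\la v\ra$ are now carried by uniformly random degree-$N/3$ polynomials independent of those carrying $\la a\ra,\la b\ra$, the masked values $\varepsilon=a+u$ and $\delta=b+v$ are uniformly distributed, so opening them leaks nothing about $a$ or $b$ (one-time pad); moreover $\la\varepsilon\ra,\la\delta\ra$ are on degree-$N/3$ polynomials with at most $T<N/6$ corrupted shares, and $N\ge N/3+2T+1$ (using the $\epsilon$ slack), so \textsf{Reconst} (Protocol~\ref{pro:Reconst}) delivers the same $\varepsilon,\delta$ to every honest party.

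Finally, $\varepsilon,\delta,\varepsilon\delta$ are public constants, so $\la c\ra=\la w\ra+\delta\la a\ra+\varepsilon\la b\ra-\varepsilon\delta$ is a linear combination whose underlying polynomial is $\phi_c=\phi_w+\delta\phi_a+\varepsilon\phi_b-\varepsilon\delta$, of degree $N/3$ (its leading coefficient is that of the fresh polynomial $\phi_w$). For $N/3$-privacy: the non-constant coefficients of $\phi_c$ are those of $\phi_w$ shifted by a quantity the adversary can already compute from its own shares of $a,b$ together with $\varepsilon,\delta$; since $\phi_w$'s non-constant coefficients are fresh, uniform, and independent of everything in the adversary's view, $\phi_c(0)=c=ab$ stays uniformly random given the adversary's $\le N/6$ shares and $\varepsilon,\delta$. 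For $N/3$-resilience: the honest parties form a supermajority of more than $5N/6$ whose correct shares lie on the degree-$N/3$ polynomial $\phi_c$, and the slack $T<(1/6-\epsilon)N$ leaves enough redundancy for error correction, so no set of $\le N/3$ parties can block reconstruction of $c$. Combining the two yields $N/3$-secrecy of $\la c\ra$.

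The step I expect to be the main obstacle is the privacy half. It is not enough that each intermediate sharing has the right secrecy level in isolation, because the adversary additionally sees the opened values $\varepsilon,\delta$, which are correlated with $u,v$ (and hence with $w$). The crux is to argue that the randomness injected into $\phi_w$ by \textsf{RenewShares} in Step~2 is independent of everything revealed by the openings, so that $\phi_c$ genuinely looks like a uniformly random degree-$N/3$ polynomial from the adversary's viewpoint; making this independence precise (ideally by appealing to the simulation-based security of \textsf{RenewShares} and of Beaver's triple construction, rather than to ad hoc counting) is the delicate part, and it is where I would spend the most care.
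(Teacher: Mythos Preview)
Your proposal is correct and follows essentially the same route as the paper: track degrees and secrecy levels through the steps of \textsf{Multiply} using the three parts of Lemma~\ref{Lem:eVSSProps}. The paper's own proof is much terser---it simply notes that $\la u\ra,\la v\ra$ have $N/6$-secrecy on degree-$N/6$ polynomials, that $\la w\ra$ therefore has $N/6$-secrecy on a degree-$N/3$ polynomial (part~3), and then applies part~2 to the final linear combination $c_i=w_i+\delta a_i+\varepsilon b_i-\varepsilon\delta$ to conclude $N/3$-secrecy.

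Where you go further than the paper is in (i) explicitly analyzing the \textsf{RenewShares} step that lifts $\la u\ra,\la v\ra$ to fresh degree-$N/3$ polynomials, (ii) arguing that opening $\varepsilon,\delta$ is a one-time pad and hence leaks nothing, and (iii) flagging the correlation between the opened values and $\la w\ra$ as the delicate point. The paper's proof does not address any of these; it leans entirely on Lemma~\ref{Lem:eVSSProps} applied to the final expression, with the masking and re-randomization discussed only informally in the prose surrounding Protocol~\ref{pro:Multiply}. Your version is more careful and arguably more complete, but structurally it is the same argument.
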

\begin{proof}
In the first step of \textsf{Multiply}, algorithm\textsf{ GenRand}
creates two shared values $\langle u\rangle=(u_{1},...,u_{N})$ and
$\langle v\rangle=(v_{1},...,v_{N})$ that correspond to degree $N/6$
polynomials. Lemma~\ref{Lem:eVSSProps} proves that $\la u\ra$ and
$\la v\ra$ both have $N/6$-secrecy. For each party $P_{i},$ based
on Lemma~\ref{Lem:eVSSProps}, $w_{i}=u_{i}\cdot v_{i}$ defines
a new sharing that is on a polynomial of degree $N/3$ and has $N/6$-secrecy.
Moreover, $\langle a\rangle$ and $\langle b\rangle$ both are on
polynomials of degree $N/3$ and have $N/3$-secrecy. Thus, based
on Lemma~\ref{Lem:eVSSProps}, $c_{i}=w_{i}+\delta a_{i}+\varepsilon b_{i}-\varepsilon\delta$
defines a new sharing $\la c\ra$ that is on a polynomial of degree
$N/3$ and has $N/3$-secrecy. \end{proof}
\begin{lem}
\label{lem:Multiply} Given two secret-shared values $\la a\ra$ and
$\la b\ra$, the protocol \textsf{\emph{Multiply }}\emph{correctly
returns a shared value $\la c\ra$ such that $c=a\cdot b$.}\end{lem}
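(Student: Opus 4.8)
The plan is to verify correctness by walking through the three steps of Protocol~\ref{pro:Multiply} in order, checking that each step preserves the relevant secret values, and then closing with Beaver's algebraic identity. First I would argue that after Step~1 the triple $(\la u\ra,\la v\ra,\la w\ra)$ satisfies $w=u\cdot v$: \textsf{GenRand} produces well-formed sharings $\la u\ra$ and $\la v\ra$ (on degree $N/6$ polynomials), and since each $P_i$ sets $w_i=u_i\cdot v_i$ by a purely local multiplication, the tuple $(w_1,\dots,w_N)$ lies on the product polynomial, whose constant term is exactly $u\cdot v$. Step~2 invokes \textsf{RenewShares} on $\la u\ra$, $\la v\ra$, $\la w\ra$; by the property of the Herzberg et al. scheme recalled in Section~\ref{sec:Basic-Tools}, adding a random polynomial vanishing at $0$ leaves the secret unchanged, so after Step~2 we still have $w=u\cdot v$ while $\la u\ra$ and $\la v\ra$ now sit on degree $N/3$ polynomials, matching $\la a\ra$ and $\la b\ra$.

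Next I would handle the reconstructions in Step~3. By the component-wise additivity of the sharing representation, $\varepsilon_i=a_i+u_i$ is a share of $\varepsilon=a+u$ on a degree $N/3$ polynomial, and likewise $\delta_i=b_i+v_i$ is a share of $\delta=b+v$. Since at most $t<N/6$ parties in the quorum are dishonest and $N\ge N/3+2t+1$, the error-correcting reconstruction \textsf{Reconst} (Protocol~\ref{pro:Reconst}) returns the correct constant terms $\varepsilon$ and $\delta$, which then become common public scalars known to every honest party. Finally, each honest $P_i$ outputs $c_i=w_i+\delta a_i+\varepsilon b_i-\varepsilon\delta$, which—again by additivity and because $\varepsilon,\delta$ are public constants—is a share of $c=w+\delta a+\varepsilon b-\varepsilon\delta$ on a degree $N/3$ polynomial. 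Expanding with $\varepsilon=a+u$, $\delta=b+v$, and $w=u\cdot v$ gives
\[
c=uv+(b+v)a+(a+u)b-(a+u)(b+v)=uv+ab-uv=ab,
\]
which is the claim.

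I do not expect a genuine obstacle here; the two points that need care are both bookkeeping. The first is tracking the degrees of the intermediate polynomials so that \textsf{Reconst} provably succeeds under $t<N/6$ (in particular noting that \textsf{RenewShares} has lifted $\la u\ra,\la v\ra$ to degree $N/3$ before the mask-and-reveal step). The second is making sure the sign convention used here, $\varepsilon=a+u$ and $\delta=b+v$ rather than the more familiar $a-u$, $b-v$, is carried consistently through the final identity. Secrecy of $\la c\ra$, and the fact that the output lies on a random degree $N/3$ polynomial, are separate matters already established in Lemma~\ref{lem:Multiply-Secrecy}.
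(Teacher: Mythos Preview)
Your argument is correct and is essentially the same approach the paper takes: the paper's own proof of this lemma is the single line ``Follows from the proof of~\cite{Beaver:1991},'' and what you have written is a careful unpacking of Beaver's identity specialized to the degree-$N/6$ triple generation and the \textsf{RenewShares} lift to degree $N/3$ described in Section~\ref{sec:protocol}. The only cosmetic point is that your displayed simplification skips the intermediate cross terms, but the expansion $uv+(b+v)a+(a+u)b-(a+u)(b+v)=ab$ is of course correct.
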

\begin{proof}
Follows from the proof of \cite{Beaver:1991}.
\end{proof}

\subsection{Proof of Theorem~\ref{thm:main}\textsf{ \label{sub:Security-of-Main}}}

In this section, we prove our main theorem (Theorem~\ref{thm:main}).
We prove in the real/ideal world model as described by Canetti~\cite{Canetti:UCSecurity:2001}.
First, we consider the protocol in an ideal model. In this model,
all parties send their input to a trusted party who computes the shuffling
circuit. Then, it sends the result to all parties. Let $A$ be the
sequence of inputs and $A^{\prime}$ be the sequence of sorted inputs
according to the random numbers associated with them. Recall that
we have at least $n-t$ honest parties. Based on Lemma~\ref{lem:permutation}
and conditioned on the event that no collision happens with high probability
(Lemma~\ref{lem:Collision}), the elements of $A^{\prime}$ that
correspond to honest parties can be any permutation of the elements
of honest parties in $A$. In other words, the probability that the
adversary can successfully map $A^{\prime}$ to $A$ is less than
$\frac{1}{(n-t)!}$ which guarantees $(n-t)$-anonymity (i.e., full
anonymity). Protocol~\ref{pro:Main} is the realization of the above
ideal model. The real model computes the circuit in a multi-party
setting. We prove this realization is correct and secure. 
\begin{description}
\item [{Setup.}] The correctness and secrecy follows from the proof of
Theorem~\ref{thm:quorum-building}.
\item [{Input~Sharing.}] The correctness and secrecy follows from the
proof of the VSS scheme used (\cite{Katz:2008:VSS,asiacrypt-2010-23846}).
After this step, each input quorum $Q_{i}$ has a correct sharing
of $P_{i}$'s input. This is the base case for our proof of circuit
computation step\textbf{.}
\item [{Random~Generation.}] The correctness and secrecy follows from
the proof of the \textsf{GenRand }algorithm.
\item [{Circuit~Computation.}] \emph{Correctness.} We prove by induction
in the real/ideal model. The invariant is that if the input shares
are correct, then the output of each gate is equal to the output when
the gate is computed by a trusted party in the ideal model, and the
result is shared between parties of the quorum correctly. For the
base case, note that the invariant is true for input gates. Induction
step is based on the universal composability of \textsf{$\mathsf{Compare}$}
and $\mathsf{Multiply}$. Moreover, based on the correctness of \textsf{$\mathsf{RenewShares}$},
the output resharing step generates new shares without changing the
output.\smallskip{}

\emph{Secrecy.} We prove by induction. The adversary cannot obtain
any information about the inputs and outputs during the computation
of each gate of the circuit. Let $Q$, and $Q^{\prime}$ be two quorums
involved in the computation of a gate, where $Q$ provides an input
to the gate, and $Q^{\prime}$ computes the gate. Consider a party
$P$. Let $S$ be the set of all shares $P$ receives during the protocol.
We consider two cases. First, if $P\notin(Q\cup Q^{\prime})$, then
elements of $S$ are independent of the shares $Q$ sends to $Q^{\prime}$.
Moreover, elements of $S$ are independent of the output of $Q^{\prime}$
since $Q^{\prime}$ also re-shares its output(s). Hence, $S$ reveals
nothing about the inputs and outputs of the gate.

Second, if $P\in(Q\cup Q^{\prime})$, then the inductive invariant
is that the collection of all shares held by dishonest parties in
$Q$ and $Q^{\prime}$ does not give the adversary any information
about the inputs and the outputs. As the base case, it is clear that
the invariant is valid for input gates. The induction step is as follows.
The adversary can obtain at most $2(N/6)=N/3$ shares of any shared
value during the computation step; $N/6$ from dishonest parties in
$Q$ and $N/6$ from dishonest parties in $Q^{\prime}$. By the secrecy
of the VSS scheme, at least $N/3+1$ shares are required for reconstructing
the secret. By the secrecy of $\mathsf{RenewShares}$ and $\mathsf{Multiply}$,
when at most $N/3$ of the shares are revealed, the secrecy of the
computation step is proved using universal computability of multi-party
protocols.
\begin{lem}
The gate computation step of Protocol~\ref{pro:Main} obliviously
swaps the pairs $(r,x)$ and $(r^{\prime},x^{\prime})$ according
to their first, i.e., if $r\leq r^{\prime}$, then this step outputs
$(\la s\ra,\la y\ra)=(\la r\ra,\la x\ra)$ and $(\la s^{\prime}\ra,\la y^{\prime}\ra)=(\la r^{\prime}\ra,\la x^{\prime}\ra)$.
Otherwise, it outputs $(\la s\ra,\la y\ra)=(\la r^{\prime}\ra,\la x^{\prime}\ra)$
and $(\la s^{\prime}\ra,\la y^{\prime}\ra)=(\la r\ra,\la x\ra)$.
In both cases, the adversary remains oblivious of which output is
mapped to the first (second) input pair.\vspace{-5pt}
\end{lem}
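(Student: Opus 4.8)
The plan is to verify the claimed input/output behavior of the gate computation step by a direct case analysis on the bit $\rho = (r \le r')$, using the correctness of the subprotocols \textsf{Compare}, \textsf{Multiply}, and addition of shared values already established earlier (Lemma~\ref{lem:Multiply} and the component-wise addition convention). First I would recall that \textsf{Compare} correctly produces a sharing $\la \rho \ra$ with $\rho \in \{0,1\}$ and $\rho = 1$ iff $r \le r'$. Then, since all arithmetic in \eqref{eq:switching} is performed on sharings via the correct primitives, the reconstructed values of $\la s\ra, \la y\ra, \la s'\ra, \la y'\ra$ equal the corresponding field expressions $\rho r + (1-\rho) r'$, $\rho x + (1-\rho) x'$, $\rho r' + (1-\rho) r$, $\rho x' + (1-\rho) x$ evaluated over $\mathbb{Z}_p$.

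The core of the argument is then two lines of substitution. If $r \le r'$, then $\rho = 1$, so $s = 1\cdot r + 0\cdot r' = r$, $y = x$, $s' = r'$, $y' = x'$, which is exactly the first claimed case. If $r > r'$, then $\rho = 0$, so $s = 0\cdot r + 1\cdot r' = r'$, $y = x'$, $s' = r$, $y' = x$, which is the second claimed case. This establishes that the step implements the comparator gate (smaller pair on the $(s,y)$ output, larger on the $(s',y')$ output, ordered by the $r$-component), hence that the full circuit $C$ realizes the sorting network of Leighton and Plaxton on the first coordinates.

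For the obliviousness clause, I would appeal to the secrecy argument already given for the circuit computation step: $\la \rho \ra$ is itself a secret sharing produced by \textsf{Compare}, and \textsf{Multiply} together with \textsf{RenewShares} guarantees that every intermediate and output sharing reveals at most $N/3$ shares to the coalition of dishonest parties in the quorums involved — below the reconstruction threshold. Consequently the adversary never learns $\rho$, and since the two possible outcomes are simply a relabeling of the two input pairs, the adversary cannot tell which input pair was routed to which output wire. The main obstacle here is not the arithmetic, which is routine, but making the obliviousness claim rigorous: one must invoke universal composability of the sub-protocols so that the simulated view in the ideal model (where the gate is a black box) is indistinguishable from the real view, and be careful that the $N/6 + N/6$ accounting for corrupted shares across the source and destination quorums genuinely stays under threshold even after \textsf{RenewShares} lifts the degree of the masking polynomials back to $N/3$. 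Since this has already been argued in the preceding paragraphs of the Circuit Computation analysis, the proof of the lemma can simply cite it.
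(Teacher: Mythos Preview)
Your proposal is correct and mirrors the paper's own proof almost exactly: both argue by case analysis on $\rho\in\{0,1\}$ produced by \textsf{Compare}, substitute into Equation~\eqref{eq:switching} using the linearity of the VSS and the correctness of \textsf{Multiply}, and then appeal to the secrecy of \textsf{Compare} and \textsf{Multiply} (the paper cites Lemmata~\ref{lem:Multiply-Secrecy} and~\ref{lem:Multiply}) for the obliviousness claim. Your write-up is somewhat more explicit about the $N/6+N/6$ share-counting and the role of \textsf{RenewShares}, but that is just spelling out what the paper handles by reference to the surrounding Circuit Computation analysis.
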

\begin{proof}
Let $\rho=\mathsf{Compare(}r,r^{\prime}\mathsf{)}$. Based on the
correctness and security of \textsf{Compare }(see \cite{Nishide:PKC:2007}),
$\rho=1$ if and only if $r\leq r^{\prime}$ and $\rho=0$, otherwise.
Also, the adversary does not learn anything about $r$ and $r^{\prime}$.
If $\rho=1$, then based on Equation~\ref{eq:switching}, the linearity
of the VSS scheme, and the correctness and secrecy of \textsf{Multiply
}(Lemmata~\ref{lem:Multiply-Secrecy} and \ref{lem:Multiply}), $s=r$,
$y=x$, $s^{\prime}=r^{\prime}$, and $y^{\prime}=x^{\prime}$. Otherwise,
$s=r^{\prime}$, $y=x^{\prime}$, $s^{\prime}=r$, and $y^{\prime}=x$. 
\end{proof}
\item [{Output~Propagation.}] The correctness and secrecy follows from
the proof of the subprotocol \textsf{Reconst}.
\end{description}

\subsubsection{Proof of Lemma~\ref{lem:protocol-costs} and Lemma~\ref{lem:protocol-costs-eVSS}
\label{sec:Proof-of-Costs}}

We first compute the cost of each step of the protocol separately
and then compute the total cost. Let $\nu_{1}(n)$ and $\nu_{2}(n)$
be the communication and computation complexity of the VSS subprotocol
respectively when it is invoked among $n$ parties. As stated in Theorem~\ref{thm:katz-vss}
and Theorem~\ref{thm:evss} both VSS protocols used in this paper
take constant rounds of communication.
\begin{itemize}
\item \emph{Setup.} The communication and computation costs are equal to
those costs of the quorum building algorithm of Theorem~\ref{thm:quorum-building},
which is $\tilde{O}(1)$ for each party. This protocol takes constant
rounds of communication.
\item \emph{Input~Broadcast.} The input broadcast step invokes the VSS
protocol $n$ times among $N=O(\log n)$ parties. So, this step sends
$O(n\cdot\nu_{1}(N))$ bits and performs $O(n\cdot\nu_{2}(N))$ operations.
Since the VSS scheme is constant-round, this step also takes constant
rounds.
\item \emph{Random~Generation.} It is easy to see that the subprotocol
\textsf{GenRand}\textsf{\textbf{ }}sends $O(N\cdot\nu_{1}(N))$ messages,
performs $O(N\cdot\nu_{2}(N))$, and has constant rounds.
\item \emph{Circuit~Computation.} The sorting network of Leighton and Plaxton~\cite{Leighton:1990:Sorting}
has \textsf{$O(n\log n)$} gates. So, the communication cost of this
step is equal to the communication and computation cost of running
\textsf{$O(n\log n)$ }instantiations of \textsf{Compare }and \textsf{RenewShares}.
\textsf{Compare} requires $O(\log q)$ invocation of\textsf{ Multiply}
which sends $O(N\cdot\nu_{1}(N))$ messages and computes $O(N\cdot\nu_{2}(N))$
operations. \textsf{RenewShares} also sends $O(N\cdot\nu_{1}(N))$
messages and computes $O(N\cdot\nu_{2}(N))$ operations. Hence, the
circuit computation phase sends $O(n\log n\cdot\log q\cdot N\cdot\nu_{1}(N))$
messages computes $O(n\log n\cdot\log q\cdot N\cdot\nu_{2}(N))$ .
Since the sorting network has depth $O(\log n)$, and \textsf{Compare}
takes constant rounds, this steps takes $O(\log n)$ rounds of communication.
\item \emph{Output~Propagation.} The costs are equal to the communication
and computation costs of running\textsf{ }$n$ invocations of \textsf{Reconst}
which costs $O(N^{2})$, plus sending the outputs to all parties,
which costs $O(n^{2}N)$. Thus, this step costs $O(n^{2}N)$. Since
\textsf{Reconst }is a constant-round protocol, this step takes constant
rounds.
\item \emph{Total.} Since $q>\frac{3}{2}kn^{2}\log n$, for a constant $k$,
$q=O(n^{3})$ and $\log q=O(\log n$). Using eVSS, we get $\nu_{1}(N)=\nu_{2}(N)=N^{2}=O(\log^{2}n)$.
Thus, Protocol~\ref{pro:Main} sends $O(n\log^{5}n)$ messages of
size $O(\kappa+\log n)$, computes $O(n(\kappa+\log n)\log^{5}n)$
operations for shuffling $n$ messages (excluding the output step).
This proves Lemma~\ref{lem:protocol-costs-eVSS}. For Lemma~\ref{lem:protocol-costs},
since $\nu_{1}(N)=\nu_{2}(N)=O(\mathsf{poly(}N\mathsf{)})$, Protocol~\ref{pro:Main}
sends $\tilde{O}(n)$ bits and computes $\tilde{O}(n)$ operations
for shuffling $n$ messages. This proves Lemma~\ref{lem:protocol-costs}.
In both cases, the output propagation step costs $O(n^{2}\log n)$
field elements. Finally, in both cases, the protocol requires $O(\log n)$
rounds of communication. This finishes the proof of Theorem~\ref{thm:main}.
\end{itemize}

\section{Remaining Algorithms\label{sec:Remaining-Algorithms}}

Beaver~\cite{Beaver:1991} describes a simple technique for generating
uniform random secrets by adding shared values chosen uniformly at
random by each party. Such a random shared value is used in several
parts of our protocol. The following subprotocol implements this technique.

{\small{}}
\begin{algorithm}[H]
{\small{\caption{\enskip{}\textsf{GenRand}}
\label{pro:GenRand}}}{\small \par}

{\small{Usage. Parties jointly generate a shared value $\la r\ra=(r_{1},...,r_{n})$
where $r$ is chosen uniformly at random from $\zp$.}}{\small \par}

{\small{\medskip{}
}}{\small \par}

\textsf{\uline{\small{$\mathsf{GenRand()}$}}}\textsf{\small{:}}{\small \par}

{\small{\smallskip{}
}}{\small \par}

{\small{For all $i\in[n]$, }}{\small \par}
\begin{enumerate}
\item {\small{Party $P_{i}$ chooses $\rho_{i}\in\zp$ uniformly at random
and secret shares it among all parties. }}{\small \par}
\item {\small{Let $\rho_{1i},...,\rho_{Ni}$ be the shares $P_{i}$ receives
from the previous step. $P_{i}$ computes $r_{i}=\sum_{j=1}^{n}\rho_{ji}$.}}\end{enumerate}
\end{algorithm}
{\small \par}

In the malicious setting, it is possible that dishonest parties send
spurious shares during secret reconstruction phase. In eVSS~\cite{asiacrypt-2010-23846}
(used in the cryptographic version of our protocol), this is solved
by asking all parties to broadcast a proof (called \emph{witness})
during reconstruction to verify broadcast shares. In our protocol,
reconstruction is postponed to after circuit computation. Since the
witnesses are generated in the sharing phase at the beginning of the
computation, and the witnesses do not have necessary homomorphic properties,
we cannot use them in our reconstruction phase. Instead, we correct
corruptions using a BCH decoding algorithm (e.g., the algorithm of
Berlekamp and Welch~\cite{Berlekamp:Welch:1986}) as in normal secret
reconstruction~\cite{Beaver:1991}. This technique is also used in
the VSS of Katz~et~al.~\cite{Katz:2008:VSS} and is implemented
in the following subprotocol.

{\small{}}
\begin{algorithm}[H]
{\small{\caption{\enskip{}\textsf{Reconst}}
\label{pro:Reconst}}}{\small \par}

\noindent {\small{Usage. Initially, all parties jointly hold a shared
value $\la a\ra=(a_{1},...,a_{n})$. Using this algorithm, parties
jointly reconstruct the secret, i.e., all parties learn the value
$a$.}}{\small \par}

{\small{\medskip{}
}}{\small \par}

\textsf{\uline{\small{$\mathsf{Reconst(\la a\ra)}$}}}\textsf{\small{:}}{\small \par}

{\small{\smallskip{}
}}{\small \par}

{\small{For all $i\in[n]$,}}{\small \par}
\begin{enumerate}
\item {\small{Party $P_{i}$ sends its share $a_{i}$ to all parties via
one-to-one communication.}}{\small \par}
\item {\small{Let $\alpha_{1},...,\alpha_{n}$ be the messages $P_{i}$
receives from the previous step. $P_{i}$ computes a degree $d=n/3$
polynomial $\phi(x)$ using the Lagrange interpolation polynomial,
\[
\phi(x)=\sum_{i=1}^{d+1}\alpha_{i}\prod_{j=1,j\neq i}^{d+1}(x-j)(i-j)^{-1}
\]
}}{\small \par}
\item {\small{For all $j\in[n]$, if there exists at least one $\alpha_{j}$
such that $\phi(j)\neq\alpha_{j}$, then $P_{i}$ runs the decoding
algorithm of Berlekamp and Welch\ \cite{Berlekamp:Welch:1986} to
recover the correct polynomial $\phi^{\prime}(x)$ of degree $d$.
For all $j\in[k]$, if $\phi^{\prime}(j)\neq\alpha_{j}$, then $P_{i}$
concludes that $P_{j}$ is dishonest and must be disqualified.}}\end{enumerate}
\end{algorithm}

\end{document}